%
\documentclass[runningheads]{llncs}
\usepackage[T1]{fontenc}
%
\usepackage{graphicx}
%
%

\usepackage{amsmath,amssymb,enumitem}
\usepackage[utf8]{inputenc}
\usepackage{todonotes}

\usepackage{thmtools}
\usepackage{chngcntr}

\usepackage{tikz}
\usepackage{float}
\usepackage{cancel}
\usepackage{algorithm}
\usepackage{algpseudocode}


\usetikzlibrary{arrows,automata}

\def\N{\mathbb{N}}

\def\ta{\mathtt{a}}
\def\tb{\mathtt{b}}
\def\tc{\mathtt{c}}

\DeclareMathOperator{\Subseq}{SubSeq}

\DeclareMathOperator{\letters}{alph}

\DeclareMathOperator{\ar}{ar}

\def\r{\operatorname{r}}

\DeclareMathOperator{\labels}{label}
\newcommand{\poly}{poly}
\DeclareMathOperator{\al}{alph}

\def\nth#1{#1$^{\text{th}}$}

\renewcommand{\epsilon}{\varepsilon}

\tikzstyle{edgeLabel}=[inner sep=0.5mm,fill=white,text=black]


\newif\ifpaper
\paperfalse 

\begin{document}
	\title{$k$-Universality of Regular Languages Revisited}
	%
	%
	\author{Duncan Adamson\inst{1}\orcidID{0000-0003-3343-2435} \and
		Pamela Fleischmann\inst{2}\orcidID{0000-0002-1531-7970} \and
		Annika Huch\inst{2}\orcidID{0009-0005-1145-5806} \and
		Tore Koß\inst{3}\orcidID{0000-0001-6002-1581} \and
		Florin Manea\inst{3}\orcidID{0000-0001-6094-3324}}
	\authorrunning{D. Adamson et al.}
	%
	\institute{School of Computer Science, University of St Andrews, UK \email{duncan.adamson@st-andrews.ac.uk} \and
		Department of Computer Science, Kiel University, Germany
		\email{\{fpa,ahu\}@informatik.uni-kiel.de} \and
		Department of Computer Science, University of Göttingen, Germany\\
		\email{\{tore.koss,florin.manea\}@cs.uni-goettingen.de}}
	\maketitle              
	\begin{abstract}
		A subsequence of a word $w$ is a word $u$ such that $u = w[i_1] w[i_2] \cdots w[i_k]$, for some set of indices $1 \leq i_1 < i_2 < \dots < i_k \leq \vert w \vert$. A word $w$ is \emph{$k$-subsequence universal} over an alphabet $\Sigma$ if every word over $\Sigma$ up to length $k$ appears in $w$ as a subsequence. In this paper, we revisit the problem $k$-ESU of deciding, for a given integer $k$, whether a regular language, given either as nondeterministic finite automaton or as a regular expression, contains a $k$-universal word.  [Adamson et al., ISAAC 2023] showed that this problem is NP-hard, even in the case when $k=1$, and an FPT algorithm w.r.t. the size of the input alphabet was given. In this paper, we improve the aforementioned algorithmic result and complete the analysis of this problem w.r.t. other parameters. That is, we propose a more efficient FPT algorithm for $k$-ESU, with respect to the size of the input alphabet, and propose new FPT algorithms for this problem w.r.t.~the number of states of the input automaton and the length of the input regular expression. We also discuss corresponding lower bounds. Our results significantly improve the understanding of this problem. 

%
		\keywords{String Algorithms \and Regular Languages \and Subregular Languages \and Regular Expressions \and Finite Automata \and Subsequences \and Universality}
	\end{abstract}
	\newpage
	
	\setcounter{page}{1}
	\section{Introduction}
Words and subsequences are two fundamental and heavily studied concepts in the field of combinatorics on words. A subsequence of a given word $w$ is a word $u$ that is obtained by deleting some letters of $w$ while preserving the order of letters in $w$ for the non-deleted ones. 
For instance, the word $\mathtt{subregularuniversality}$ has the scattered factors $\mathtt{glue}, \mathtt{guilty}, \mathtt{surreality}$ while $\mathtt{versailles}$ is not a scattered factor since the letters do not occur in the same order as in the whole word. \looseness=-1

Within theoretical computer science, subsequences were considered mainly in two areas. On the one hand, there are many algorithmic and complexity problems arising from this object, e.g., in problems such as the longest or shortest common subsequence or supersequence problems \cite{chvatal1975longest,Hirschberg77,HuntS77,Maier78,MasekP80,NakatsuKY82,Baeza-Yates91,BergrothHR00,BringmannC18,BringmannK18,AbboudEtAl2014,AbboudEtAl2015,AbboudRubinstein2018}, or matching and analysis problems related to the sets of subsequences occurring in a word \cite{hebrard1991algorithm,simon2003words,tronicek2003common,DBLP:journals/jda/CrochemoreMT03,barker2020scattered,day2021edit,fleischer2018testing,KimKH22,gawrychowski2021simons,kosche2021absent,DBLP:conf/rp/FleischmannKKMNSW23}; see also the survey \cite{Kosche2022SubsequenceSurvey}. On the other hand, subsequences are intensely studied objects in combinatorics on words, formal languages, automata theory and logics, especially with respect to their strong relation to piecewise-testable languages \cite{KarandikarKS15,karandikar2016height,Simon72,simon1972thesis}, or subword orders and downward closures \cite{halfon2017decidability,DBLP:conf/csr/Kuske20,DBLP:conf/fossacs/KuskeZ19,zetzsche2016complexity,DBLP:conf/lics/Zetzsche18}.
Such theoretical results found applications in a wide number of fields including bioinformatics \cite{han2020novel,shikder2019openmp}, and modelling concurrency \cite{shaw1978software}. Still on the applicative side, a new research direction, originating in database theory \cite{artikis2017complex,Kleest-Meissner22,Kleest-Meissner23,FrochauxK23}, deals with constrained-subsequences of strings, where the substrings occurring between the positions of the subsequence are subject to regular or length constraints; a series of algorithmic and complexity results related to this setting were obtained \cite{DayKMS22,KoscheKMP22,Goettingen2023words,ManeaRS24}. \looseness=-1

The work of this paper is positioned in between the study of algorithmic properties of subsequences and automata and formal language theory. More precisely, we revisit and extend the results of \cite{SchnoebelenV23,regunivpaper,fazekas2024subsequencematchinganalysisproblems}. In the respective works, the authors generalized fundamental algorithmic questions related to the subsequences occurring in a single word to the case of subsequences occurring in (finite or infinite) sets of words represented succinctly by automata accepting these sets or by grammars generating them (i.e., formal languages). In general, two main classes of questions were investigated: for some subsequence-related property of words, given a generative or accepting mechanism for a language, one is interested, on the one hand, whether there is at least one word of the language which fulfils the respective property, or, on the other hand, whether all words of the language fulfil the respective property. The main such subsequence-property approached in \cite{SchnoebelenV23,regunivpaper,fazekas2024subsequencematchinganalysisproblems} is the $k$-universality of words. Formally, a word over an alphabet $\Sigma$ is called \emph{$k$-universal} if its set of subsequences of length $k$ includes every word of length $k$ over $\Sigma$. An extensive literature deals with this concept, mostly from a theoretical perspective \cite{karandikar2016height,schnoebelen2019height,adamson2023words,barker2020scattered,day2021edit,fleischmann2021scattered,fleischmann2023alphabetafactorization,kosche2021absent,SchnoebelenV23}; see \cite{day2021edit,regunivpaper} and the references therein for a detailed discussion of the motivation for the study of $k$-universality. Coming back to our focus, in \cite{regunivpaper}, the authors investigated two decision problems regarding universality, corresponding to the two directions mentioned above. First is the problem of deciding whether there exists a $k$-universal word in a given regular language $L$, specified as a non-deterministic finite automaton, for some given $k$ (for short $k$-ESU for $k$ \textbf{e}xistence \textbf{s}ubsequence \textbf{u}niversality); this problem was shown to be NP-complete, and an FPT algorithm, with respect to the size of the input alphabet, was given for it. Second is the problem whether all words in $L$ are $k$-universal ($k$-USU for $k$ \textbf{u}niversal \textbf{s}ubsequence \textbf{u}niversality); this problem was shown to be solvable in polynomial time. In \cite{fazekas2024subsequencematchinganalysisproblems}, the authors extended the study of these problems for context-free and context-sensitive languages, specified by grammars generating them;  \cite{SchnoebelenV23} discusses similar problems for strings encoded via straight-line programs. In this context, the problem $k$-ESU is the harder and less understood of the two aforementioned problems, so we revisit it in this paper. \looseness=-1

The research direction proposed in \cite{SchnoebelenV23,regunivpaper,fazekas2024subsequencematchinganalysisproblems}, and revisited here, is not completely new. It originates in the famous result of Higman~\cite{higman1952ordering} which states that the downward closure of every language (i.e., the set of all subsequences of the strings of the respective language) is regular; however, an automaton accepting this language is not always computable (e.g., for the class of context-sensitive languages). As downward closures of languages can be seen as structurally-simple yet faithful-enough abstractions of complex (and practically relevant) formal languages, they become useful in practical applications and well studied from a theoretical point of view (see~\cite{zetzsche2016complexity,DBLP:conf/lics/Zetzsche18,AnandZ23} and the references therein). In our setting, downward closures, the fact that they are regular and it is not hard to compute an automaton for them (since we deal with regular languages only), could also be useful in approaching problems such as $k$-ESU. But, as discussed in \cite{fazekas2024subsequencematchinganalysisproblems}, such solutions would be quite inefficient. So, our main aim in revisiting $k$-ESU is to extend the work of \cite{regunivpaper,fazekas2024subsequencematchinganalysisproblems} on it, and identify new, more efficient solutions, as well as a better understanding of the lower bounds for solving this problem. Moreover, as $k$-ESU was already shown to be hard for regular languages (even for finite languages), we also consider first this class of languages in this paper, and leave the investigation of more complex classes of languages as future work.\looseness=-1

%
%
%
%
%
%
%
%

\emph{Our Contributions.} Our main results provide a very detailed image of the parametrised complexity of the $k$-ESU problem for regular languages. When the input regular language is given as an NFA, this problem has three parameters: the number $n$ of states of the NFA, the size $\sigma$ of the input alphabet, and the integer $k$. We analyse the complexity of $k$-ESU with respect to all these parameters. As mentioned above, an FPT algorithm w.r.t. $\sigma$ was known from \cite{regunivpaper}; we give here a faster algorithm, which runs in linear time for constant alphabets. We also give here the first FPT algorithm for $k$-ESU w.r.t. $n$, which runs in linear time for NFAs with a constant number of states. As already $1$-ESU is NP-hard (when the input regular language is given as NFA, DFA, or regular expression), we can conclude that, unless P=NP, there is no FPT algorithm, w.r.t. the parameter $k$, solving $k$-ESU. When the input regular language is given as a regular expression of length $n$ over an alphabet of size $\sigma$, we obtain an FPT algorithm w.r.t. $\sigma$; $1$-ESU remains NP-complete in this setting. Finally, based on the lower bounds shown here and in \cite{regunivpaper}, we show that our algorithmic results are tight, from a fine-grained complexity perspective. As far as the used techniques are concerned, we extensively use a graph-theoretical tool box, which allows us to gain new combinatorial insights in the structure of the $k$-ESU problem, and develop efficient methods for solving it. \looseness=-1
\ifpaper
Due to space restrictions, some proofs are only given in Appendix~\ref{sec:appendixproofs}. 
\else
\fi

	\section{Preliminaries}\label{prelims}
	Let  $\N = \{1,2,\ldots\}$ denote the natural numbers and set $\N_0 = \N
\cup \{0\}$. Let $[n]=\{1,\ldots,n\}$ and let $[i,n]=\{i, i+1, \ldots, n\}$ for all $i,n\in\N_0$ with $i \leq n$.

An \emph{alphabet} $\Sigma=\{1,2,\ldots,\sigma\}$ is a finite set of symbols, called
\emph{letters} (w.l.o.g., we can assume that the letters are integers). A \emph{word} (also known as a \emph{string}) $w$ is a finite sequence of letters from a given alphabet. The length of a word $w$, denoted $\vert w \vert$ is the number of letters in the word. For $i \in
[|w|]$ let $w[i]$ denote the $i^{th}$ letter of $w$.  The set of all
finite words over the alphabet $\Sigma$, denoted by $\Sigma^{\ast}$, is the free monoid generated by $\Sigma$ with
concatenation as operation and the neutral element is the empty word
$\varepsilon$, i.e., the word of length $0$. Given $n \in \N_0$, let $\Sigma^n$ denote all words in $\Sigma^{\ast}$ exactly of length $n$ and $\Sigma^{\leq n}$ the set of all words of $\Sigma^{\ast}$ of length at most $n$.
Let $\letters(w) = \{\ta \in \Sigma \mid \exists i \in [|w|]: w[i] = \ta \}$ be the alphabet of $w$. 
For $u,w\in\Sigma^{\ast}$, $u$ is called a \emph{factor}
of $w$, if $w = xuy$ for some words $x,y\in\Sigma^{\ast}$. If $x = \varepsilon$ (resp.,
$y = \varepsilon$) then $u$ is called a \emph{prefix} (resp., \emph{suffix}) of
$w$.  For $1\leq i\leq j\leq|w|$ 
define the factor from $w$'s \nth{$i$} letter to the \nth{$j$} letter by  $w[i,j]=w[i]\cdots w[j]$.
Given a pair of indices $i  < j$, we assume $w[j, i] = \varepsilon$.
 For $w\in\Sigma^{\ast}$ and $n\in\N_0$ define inductively $w^0=\varepsilon$ and $w^n=ww^{n-1}$.



As we are interested in investigating the $k$-subsequence universality of subregular languages, we first introduce the basic concepts related to subsequences. We then present the definitions for the transformation of these notions to the domain of subregular languages, finite automata and other language models.

\begin{definition}
	Let $w \in \Sigma^*$ and $n \in \N_0$. A word $u \in \Sigma^*$ is called \emph{subsequence} of $w$ ($u \in \Subseq(w)$) if there exist $v_1, \ldots, v_{n+1} \in \Sigma^*$ such that $w = v_1 u[1] v_2 u[2] \cdots v_n u[n] v_{n+1}$. Let $\Subseq_k(w) = \{u \in \Subseq(w) \mid \vert u \vert = k\}$.
\end{definition}

	Subsequences of $\mathtt{automatauniversality}$ are $\mathtt{auto}$, $\mathtt{tomata}$, $\mathtt{salty}$,  and $\mathtt{atom}$ while $\mathtt{star}$, $\mathtt{alien}$ are not because their letters occur in the wrong order.

In \cite{barker2020scattered}, the authors investigated words which have, for a given $k\in\N_0$, all words from $\Sigma^{k}$ as subsequence, namely $k$-subsequence universal words. Note that this notion is similar to the one of richness introduced and investigated in \cite{KarandikarKS15,karandikar2016height}. We stick here to the notion of $k$-subsequence universality since our focus is subregular languages and thus the well-known notion of the universality of automata and formal languages, i.e., $L({A})=\Sigma^{\ast}$ for a given finite automaton ${A}$, is close to the one of subsequence universality of words.

\begin{definition}\label{def:subsequence}
	A word $w \in \Sigma^*$ is called \emph{$k$-subsequence universal} (w.r.t. $\Sigma$), for $k \in \N_0$, if	
	$\Subseq_k(w) = \Sigma^k$. If the context is clear we call $w$ $k$-universal. The universality-index $\iota(w)$  is the largest $k$ such that $w$ is $k$-universal.
\end{definition}


Further, we recall the {\em arch factorisation} by Hébrard \cite{hebrard1991algorithm}.

\begin{definition}
The {\em arch factorisation} of $w \in \Sigma^*$ is defined by $w = \ar_1(w) \cdots$ $\ar_k(w)$ $\r(w)$ for $k \in \mathbb{N}_0$ with
$\ar_i(w)[\vert \ar_i(w) \vert] \notin \letters(\ar_i(w)[1, \vert \ar_i(w) \vert - 1 ])$ for all $i \in [k]$ and $\iota(\ar_i(w))=1$, as well as $\letters(\r(w)) \subsetneq \Sigma$.
	The words $\ar_i(w)$ are the \emph{arches} and $\r(w)$ is the \emph{rest} of $w$.
\end{definition}

\begin{example}\label{prelims:exampleuniversalitycondt}
\label{prelims:exampleuniversality}
	Consider $w=\mathtt{baaababb} \in \{\ta,\tb\}^*$. We have $\vert \Subseq_3 (\mathtt{baaababb})\vert = \vert \{\mathtt{aaa}, \mathtt{aab}, \mathtt{aba}, \mathtt{abb}, \mathtt{baa}, \mathtt{bab}, \mathtt{bba}, \mathtt{bbb}\} \vert = 2^3$. Since $\mathtt{abba} \notin \Subseq_4(\mathtt{baaababb})$, it follows that $\iota(\mathtt{baaababb}) = 3$.
Further, $w=\mathtt{(ba) \cdot (aab) \cdot (ab) \cdot b}$ is the arch factorisation of $w$ where the parentheses denote the three arches and the rest $\mathtt{b}$.
\end{example}
%
%


\begin{theorem}[\cite{barker2020scattered}]
	Let $w \in \Sigma^{\geq k}$ with $\al(w) = \Sigma$. Then we have $\iota(w)=k$ iff $w$ has exactly $k$ arches.
\end{theorem}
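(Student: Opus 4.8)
The plan is to prove the equivalence by relating the arch factorisation directly to the universality index $\iota(w)$, using the defining property that each arch $\ar_i(w)$ is itself $1$-universal (i.e.\ $\iota(\ar_i(w)) = 1$) and that the rest $\r(w)$ satisfies $\letters(\r(w)) \subsetneq \Sigma$. The key structural fact I would establish first is a \emph{greedy/leftmost characterisation} of the arches: reading $w$ left to right, the first arch $\ar_1(w)$ ends exactly at the position where, for the first time, all $\sigma$ letters of $\Sigma$ have been seen; the second arch ends at the first position after that where all letters are seen again; and so on. This is precisely the content of the conditions that $\ar_i(w)[\,|\ar_i(w)|\,]\notin \letters(\ar_i(w)[1,|\ar_i(w)|-1])$ and $\iota(\ar_i(w))=1$, which together force each arch to be a \emph{minimal} block containing every letter of $\Sigma$. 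So I would begin by noting that the arch factorisation is unique and coincides with this greedy decomposition.

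Next I would prove the direction: if $w$ has exactly $k$ arches, then $\iota(w) \ge k$. For this, take any target word $v \in \Sigma^k$ of length $k$. I would embed $v$ as a subsequence by matching its $i$-th letter $v[i]$ inside the $i$-th arch $\ar_i(w)$; since each arch is $1$-universal, it contains every letter of $\Sigma$, so in particular $v[i] \in \Subseq_1(\ar_i(w))$. Because the arches are consecutive disjoint factors of $w$ appearing in order, concatenating these single-letter matches gives an occurrence of $v$ as a subsequence of $w$. As $v$ was arbitrary, $\Subseq_k(w) = \Sigma^k$, hence $\iota(w) \ge k$.

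For the converse I would show $\iota(w) \le k$, i.e.\ there is some word of length $k+1$ that is \emph{not} a subsequence of $w$. Let $\ta \in \Sigma$ be any letter not appearing in the rest, so $\ta \notin \letters(\r(w))$ (such a letter exists because $\letters(\r(w)) \subsetneq \Sigma$). I would argue that the word $\ta^{k+1}$ is not a subsequence of $w$: greedily matching copies of $\ta$, the leftmost occurrence of each successive $\ta$ consumes exactly one arch, because by minimality of the arches no single arch can be used to match two letters that each require seeing all of $\Sigma$ afterwards — more carefully, I would use the fact that matching $\ta$ greedily partitions the consumed prefix in a way that the $(k+1)$-th copy would have to be found in $\r(w)$, where $\ta$ does not occur. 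Thus $\ta^{k+1} \notin \Subseq(w)$, giving $\iota(w) < k+1$, i.e.\ $\iota(w)\le k$. Combining the two bounds yields $\iota(w) = k$.

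The main obstacle I anticipate is making the converse argument fully rigorous: one must show precisely that greedily matching $k+1$ copies of a carefully chosen letter forces the match to run past all $k$ arches and into the rest. The cleanest route is probably to invoke the leftmost/greedy arch characterisation as a lemma and relate greedy subsequence matching of a \emph{constant} word to consuming arches one at a time; an alternative, perhaps smoother, approach is to prove the slightly more general statement that $\iota(w)$ equals the number of arches \emph{by induction} on the number of arches, peeling off $\ar_1(w)$ and applying the induction hypothesis to the suffix $\ar_2(w)\cdots\ar_k(w)\r(w)$. I would keep both options in mind and choose whichever makes the boundary bookkeeping (the role of $\al(w)=\Sigma$ and of the rest) least error-prone.
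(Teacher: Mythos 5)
Your forward direction ($k$ arches $\Rightarrow \iota(w)\geq k$) is correct and is the standard argument: embed an arbitrary $v\in\Sigma^k$ letter-by-letter, one letter per arch. (Note that the paper itself only cites this theorem from \cite{barker2020scattered} and gives no proof, so the comparison here is purely about correctness.)

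The converse, however, contains a genuine error, not just a missing rigor step: the witness $\ta^{k+1}$ with $\ta\notin\letters(\r(w))$ need not fail to be a subsequence. Concretely, take $\Sigma=\{\ta,\tb\}$ and $w=\ta\ta\tb$: the arch factorisation is the single arch $\ta\ta\tb$ with empty rest, so $k=1$ and your recipe allows choosing $\ta$; but $\ta^{2}=\ta\ta$ \emph{is} a subsequence of $w$. (A variant with nonempty rest: $w=\ta\ta\tb\ta\ta\tb\tb$ has $k=2$ arches and rest $\tb$, yet $\ta^{3}$ is a subsequence.) The underlying flaw is your claim that greedily matching a constant word consumes one arch per letter: minimality of an arch constrains only its \emph{last} letter, so an arbitrary letter may occur many times inside a single arch and absorb several copies of $\ta$. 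The correct witness uses exactly the letters that are unique within their arches: set $v=\ar_1(w)[\,|\ar_1(w)|\,]\cdots \ar_k(w)[\,|\ar_k(w)|\,]\cdot \ta$ with $\ta\notin\letters(\r(w))$. Since the last letter of $\ar_i(w)$ occurs nowhere earlier in $\ar_i(w)$, leftmost matching of $v$ is forced to land on the arch boundaries, and the final $\ta$ must then be found in $\r(w)$, where it does not occur; hence $v\notin\Subseq(w)$ and $\iota(w)\leq k$. Your fallback plan (induction peeling off $\ar_1(w)$) does go through, but only with this same fix: the letter you prepend at each step must be the last letter of the first arch, not an arbitrary letter missing from the rest.
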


We now recall basic notions on finite automata, for further details we refer to \cite{HopcroftU79}. 
%
%
%
A {\em non-deterministic finite automaton (NFA)} ${A}$ is a tuple $(Q,\Sigma,q_0,\delta,F)$ with the finite set of states $Q$ (of cardinality $n\in\N$), an initial state $q_0\in Q$, the set of final states $F\subseteq Q$, an input alphabet $\Sigma$, and a transition function $\delta:Q\times (\Sigma \cup \{\epsilon\}) \rightarrow 2^{Q}$, where $2^{Q}$ is the powerset of $Q$. If $q_2\in \delta(q_1,a)$, for some $a\in \Sigma \cup \{\epsilon\}$, then we have the transition $(q_1,a,q_2)$ in $A$. If we have $|\delta(q,a)|= 1$ and $\delta(q,\epsilon)= \emptyset$ for all $q\in Q, a\in \Sigma$, then ${A}$ is called {\em deterministic} (DFA). 

We call a sequence of transitions $\beta=(q,a_1,q_1)(q_1,a_2,q_2)\ldots(q_{\ell-1},a_{\ell},q_\ell)$ an \emph{$\ell$-length walk} from any $q \in Q$ to $q_\ell$ in ${A}$; in such a walk, we have $q_i\in Q$ for all $i\in[0,\ell]$, $a_i\in \Sigma\cup\{\epsilon\}$, and $q_{i+1}\in \delta(q_i,\ta_{i+1})$, for all $i\in[0,\ell-1]$.  The word $a_1\cdots a_\ell$ is the {\em label of $\beta$} and is denoted by $\labels(\beta)$. A state $q\in Q$ is called \emph{accessible} (respectively,\emph{co-accessible}) in ${A}$ if there exists a walk connecting $q_0$ to $q$ (respectively, $q$ to a final state). A walk is called {\em accepting} if $q_{\ell}\in F$ holds. The language of ${A}$, i.e., the set of words {\em accepted} by ${A}$, is $L({A})=\{w\in\Sigma^{\ast}\mid \exists \mbox{ accepting walk }\beta\mbox{ in }{A}:\,w=\labels(\beta)\}$. 
Note that the class of languages accepted by NFAs is equal to the class of languages accepted by DFAs and it is equal to the class of regular languages. Moreover, for every word $w\in L({A})$ there exists exactly one (in the deterministic case) or a set (in the non-deterministic case) of walk(s) labelled with $w$. \looseness=-1
 

 Further, we define regular expressions (regex, for short) and their semantics.

\begin{definition}
	Let $\Sigma$ be an alphabet. Then $\emptyset, \varepsilon,$ and $\ta \in \Sigma$ are \emph{regular expressions}.
	Further, for two regular expressions $R_1$ and $R_2$, the terms $R_1 R_2$, $(R_1\mid R_2)$ and $(R_1)^*$ are regular expressions. 
	
	A regular expression $R$ that contains no $*$ is called \emph{star-free}.

	Let $R$ be a regular expression. Then the semantics of $R$ is inductively defined by the base cases $L (\emptyset) = \emptyset$, $L(\varepsilon) = \{\varepsilon\}$ and ${L}(\ta) = \{\ta\}$ for $\ta\in\Sigma$.
	Further, we define $ L(R_1 R_2) =  L(R_1)  L(R_2)$, $ L((R_1 | R_2)) =  L(R_1)\cup  L(R_2)$ and $ L((R_1)^*) =  L(R_1)^*$ for regular expressions $R_1,R_2$.
	
	$ L(R)$ is called the language described by the regular expression $R$. 
\end{definition}

 For example, $ L (\ta^*\tb|\varepsilon) =  L (\ta^* \tb) \cup \{\varepsilon\} =  L (\ta^*) \{\tb\} \cup \{\varepsilon\} = \{\ta\}^* \{\tb\} \cup \{\varepsilon\}$.

Now, we can define the $k$-subsequence universality of a regular language~$L$. Here, we distinguish whether at least one or all words of $L$ are $k$-universal w.r.t.~Definition~\ref{def:subsequence}. Note that we always take the minimal alphabet $\Sigma$ such that $L\subseteq \Sigma^*$ as a reference when considering the $k$-universality of words from $L$, as otherwise, for larger alphabets, the universality is trivially $0$.

\begin{definition}\label{def:univ} Let $k\in \N$ be an integer.
A language $L$ is {\em existence $k$-subsequence universal} ($k$-$\exists$-universal) if there
exists a $k$-universal word $w\in L$, and $L$ is {\em universal $k$-subsequence universal} ($k$-$\forall$-universal) if all words $w\in L$ are $k$-universal. \looseness=-1
\end{definition}

If a language $L$, accepted (represented) by some automaton ${A}$ (or regular expression $R$), is $k$-$\exists$-universal (respectively, $k$-$\forall$-universal) then we also say that the automaton ${A}$ (regular expression $R$) is $k$-$\exists$-universal (respectively, $k$-$\forall$-universal). We also say that a {\em walk $\pi$ in ${A}$ is $k$-universal} if $\labels(\pi)$ is $k$-universal. 

For instance, the expression $(\ta^*\tb^*\tc | \tb\tc) | \ta^*$ is a $1$-$\exists$-universal regular expression that is not $2$-$\exists$-universal. Further examples are given in Figure~\ref{img:universal_automaton}. 

Starting from Definition \ref{def:univ}, two decision problems were introduced in \cite{regunivpaper}. 

\ifpaper
\else
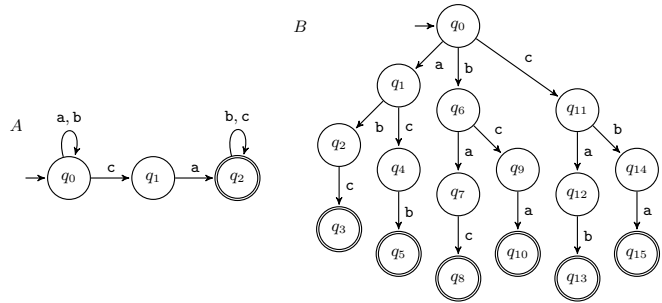
\begin{figure}
	\centering
	\begin{minipage}[c]{0.3\textwidth}
		\scalebox{0.7}
		{\begin{tikzpicture}
				[->,>=stealth',shorten >=1pt,auto,node distance=1.6cm,semithick]
				\tikzstyle{every state}=[initial text=$ $]
				\node[initial,state] (A)                    {$q_0$};
				\node[state]         (B) [right of=A]       {$q_1$};
				\node[state,accepting] (D) [right of=B]     {$q_2$};
				
				\path (A) edge [loop above] node {$\mathtt{a}, \mathtt{b}$} (A)
				edge              node {$\mathtt{c}$} (B)
				(B) edge              node {$\mathtt{a}$} (D)
				(D) edge [loop above] node {$\mathtt{b,c}$} (D);
				\node at (-1,1) {${A}$};
		\end{tikzpicture}}
	\end{minipage}	
	\begin{minipage}[c]{0.4\textwidth}
		\scalebox{0.7}
		{\begin{tikzpicture}[->,>=stealth',shorten >=1pt,auto,node distance=1.6cm,
				semithick]
				\tikzstyle{every state}=[initial text=$ $]
				\node[initial,state] (A)                    {$q_0$};
				\node[state]         (B) [below left of=A]       {$q_1$};
				\node[state]         (C) [below left of=B]       {$q_2$};
				\node[state,accepting] (D) [below of=C]       {$q_3$};
				\node[state]         (E) [below of=B]       {$q_4$};
				\node[state,accepting] (F) [below of=E]       {$q_5$};
				\node[state]         (q6) [below of=A]        {$q_6$};
				\node[state]         (q7) [below of=q6]       {$q_7$};
				\node[state,accepting] (q8) [below of=q7]       {$q_8$};
				\node[state]         (q9) [below right of=q6]       {$q_9$};
				\node[state,accepting] (q10) [below of=q9]       {$q_{10}$};
				\node[state]         (q11) [above right of=q9]       {$q_{11}$};
				\node[state]         (q12) [below of=q11]       {$q_{12}$};
				\node[state,accepting] (q13) [below of=q12]       {$q_{13}$};
				\node[state]         (q14) [below right of=q11]       {$q_{14}$};
				\node[state,accepting] (q15) [below of=q14]       {$q_{15}$};
				
				\path (A) edge node {$\mathtt{a}$} (B)
				edge node {$\mathtt{b}$} (q6)
				edge node {$\mathtt{c}$} (q11)
				(B)	edge node {$\mathtt{b}$} (C)
				edge node {$\mathtt{c}$} (E)
				(C) edge node {$\mathtt{c}$} (D)
				(E) edge node {$\mathtt{b}$} (F)
				(q6) edge node {$\mathtt{a}$} (q7)
				edge node {$\mathtt{c}$} (q9) 
				(q7) edge node {$\mathtt{c}$} (q8)
				(q9) edge node {$\mathtt{a}$} (q10)
				(q11) edge node {$\mathtt{a}$} (q12)
				edge node {$\mathtt{b}$} (q14)
				(q12) edge node {$\mathtt{b}$} (q13)
				(q14) edge node {$\mathtt{a}$} (q15);
				\node at (-3,0) {${B}$};
		\end{tikzpicture}}
	\end{minipage}
	\caption{A $2$-$\exists$-universal NFA ${A}$ and a $1$-$\forall$-universal  NFA ${B}$. In ${A}$, we have the 2 universal word $\mathtt{a}\mathtt{b}\mathtt{c}\mathtt{a}\mathtt{b}\mathtt{c}$, however no word can contain more than $1$ occurrence of $\mathtt{a}$ after the first occurrence of $\mathtt{c}$, thus can contain at most $2$-arches. In ${B}$, we have a regular language that recognises the set of permutations of $\mathtt{a} \mathtt{b} \mathtt{c}$, thus each word contains exactly one arch and hence is $1$-universal.}
	\label{img:universal_automaton}
\end{figure}
\fi

\begin{itemize}[leftmargin=10pt,itemsep=1cm,parsep=-1cm]
\item[1.] The {\em existence subsequence universality problem for regular languages ($k$-ESU)} is to decide for a language, given by an NFA $\mathcal{A}$ recognising it (respectively, by a regex $R$ describing it), and $k\in\N$ whether $L$ is $k$-$\exists$-universal. 

\item[2.] The {\em universal subsequence universality problem for regular languages ($k$-USU)} is to decide for a language, given by an NFA $\mathcal{A}$ recognising it (regex $R$ describing it), and $k\in\N$ whether $L$ is $k$-$\forall$-universal.
		\vspace*{-5pt}
\end{itemize}

Here we focus on the first problem, as $k$-USU can already be solved in polynomial time for all regular languages, given as NFAs or regexes.

\ifpaper
Our computational model is the RAM model with word size $\omega \geq \log n $, where $n$ is the size of the input; that is, the input size never exceeds $2^\omega$. We also assume that the given strings are over integer alphabets $\Sigma=\{1,\ldots,\sigma\}$, while input NFAs have state-sets $Q=\{0,\ldots,q\}$, with $
\sigma,q\leq n$. See Appendix~\ref{sec:computational_model} for details.\looseness=-1
\else
\subsubsection*{Computational Model}
The computational model we use to state our algorithms is the standard unit-cost word RAM with logarithmic word-size $\omega$ (meaning that each memory word can hold $\omega$ bits). It is assumed that this model allows processing inputs of size $n$, where $\omega \geq \log n$; in other words, the size $n$ of the data never exceeds (but, in the worst case, is equal to) $2^\omega$. Intuitively, the size of the memory word is determined by the processor, and larger inputs require a stronger processor (which can, of course, deal with much smaller inputs as well). Indirect addressing and basic arithmetical operations on such memory words are assumed to work in constant time. Note that numbers with $\ell$ bits are represented in $\O(\ell/\omega )$ memory words, and working with them takes time proportional to the number of memory words on which they are represented. This is a standard computational model for the analysis of algorithms, defined in \cite{FredmanW90}. Our algorithms have automata, strings (regular expressions), and numbers as input, so we follow standard assumptions, Namely, we work with languages over {\em integer alphabets} (see, e.\,g.,~\cite{crochemore}): whenever we are given an input of size $n$, we assume that the alphabet of the given automata or regex is $\Sigma=\{1,2,\ldots,\sigma\}$, with $|\Sigma|=\sigma\leq n$; we also assume that the states of the input automata, when we deal with this model, are given as integers from $\{1,\ldots,q\}$, where $q\leq n$. Also, automata are specified as a list of states (with the initial state and the final ones highlighted), and, for each state we are given the list of transitions leaving it. 

One of our algorithms (from Theorem \ref{thm:FPT_sigma}, which is then used as a subroutine in Theorem \ref{thm:FPT_sigma_regex}, as well) uses exponential space w.r.t.\ the size of the input alphabet, namely $\sigma$, but polynomial w.r.t.\ all the other components of the input. To avoid clutter, we assume that our exponential-time and -space algorithms runs on a RAM model where we can allocate as much memory as our algorithms needs (i.e., the size of the memory-word $\omega$ is big enough to allow addressing all the memory we need in this algorithm in constant time). For the case of $\sigma \in O(1)$, this additional assumption becomes superfluous; for non-constant $\sigma$, we emphasise that the big size of memory words is only used for building and addressing large data structures (that is, large matrices), but not for speeding up our algorithms by, e.g., allowing constant-time operations on big numbers (that is, numbers exponential in the size of the input).


\fi

\section{Preprocessing: Graph-theoretic View on NFAs}\label{sec:preProc}

In the problems discussed in this paper, we consider a regular language $L$, over the alphabet $\Sigma$, with $|\Sigma|=\sigma$, given either as a regular expression or as an NFA. In this section we only consider the case when the language is given as an NFA $A=(Q,\Sigma,q_0,F,\delta)$ with $n$ states and $m$ transitions, and initial state $q_0$. 

We aim to obtain efficient algorithms that, for a given $k$, check whether $L$ contains a word $w$ of universality $\iota(w)\geq k$. In particular, we will discuss two algorithms, one which is FPT with respect to the parameter $n$, and one which is FPT w.r.t. the parameter $\sigma$. In this section, we describe a series of steps that will be executed as preprocessing for both these algorithms, and introduce a series of definitions used in the description of these algorithms.

Firstly, we note that we can assume w.l.o.g. that all states in $A$ are both accessible and co-accessible and that $A$ has a single final state $f$ (i.e, $F=\{f\}$), which is not the origin of any transition (i.e., $\delta(f,a)=\emptyset$ for all $a\in \Sigma$). Indeed, if we are given an arbitrary NFA $A'$, we can transform it, by standard methods (see, e.g., \cite{HopcroftU79}), in linear time in the total size of $A'$ (that is, number of states plus number of transitions), and obtain an automaton $A$ for which these assumptions hold. Clearly, for $A$, we also have that $\sigma\leq m\leq n^2 \sigma$ (as for each pair of states $(q_1,q_2)$, we can have at most $\sigma$ transitions from $q_1$ to $q_2$, one for each letter, and, for each letter, there should be at least one transition labelled with that letter). 

We now make several remarks on NFAs that are useful for our results. 

Each NFA $A$ can be seen canonically as a directed graph $G_A$, whose vertex-set is the set $Q$ of states of $A$, and there is an edge from state $q_1$ to state $q_2$ in the graph $G_A$ if an only if there is a transition from $q_1$ to $q_2$ in $A$. We recall the following lemma, see, e.g. \cite{Tarjan72}. \looseness=-1

\begin{lemma}\label{lem:stronglyConnected}
Given an NFA $A$, we can compute in $O(n+m)$ time the decomposition of the associated graph $G_A$ in strongly connected components ${\mathcal C}_1, \ldots, {\mathcal C}_e$.
\end{lemma}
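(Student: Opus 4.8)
The plan is to invoke Tarjan's classical single-pass depth-first search algorithm on the directed graph $G_A$, whose vertex set is $Q$ (so $n$ vertices) and whose edge set corresponds to the transitions of $A$ (at most $m$ edges, noting that parallel transitions between the same ordered pair of states carrying different labels collapse to a single edge in $G_A$, since only reachability matters for strong connectivity). The adjacency-list representation of $G_A$ is obtained directly from the transition list in $O(n+m)$ time, so it suffices to show that the SCC decomposition of a directed graph can be computed in time linear in its size.

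First I would run a depth-first search from every not-yet-visited vertex, building the DFS forest. During the traversal I maintain a global counter assigning to each vertex $v$, upon first discovery, an integer discovery index $\mathrm{ind}(v)$, together with a low-link value $\mathrm{low}(v)$ initialised to $\mathrm{ind}(v)$. I also keep an auxiliary stack onto which each vertex is pushed when discovered, and from which vertices are popped only when a full SCC has been identified. When the search considers an edge $(v,u)$, the value $\mathrm{low}(v)$ is updated to $\min(\mathrm{low}(v),\mathrm{low}(u))$ if $u$ is explored as a tree child, and to $\min(\mathrm{low}(v),\mathrm{ind}(u))$ if $u$ is already discovered and still on the stack; edges to vertices already assigned to a completed component are ignored.

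The key step is the correctness argument, which rests on the invariant that, after $v$ is processed completely, $\mathrm{low}(v)$ equals the smallest discovery index reachable from $v$ via a path that uses at most one non-tree edge and stays within vertices currently on the stack. From this one derives that $\mathrm{low}(v)=\mathrm{ind}(v)$ holds exactly when $v$ is the root, i.e.\ the first-discovered vertex, of its strongly connected component; at that moment the component consists precisely of the vertices lying above $v$ on the stack, which are popped off and output. Both directions follow from the standard structural facts that each SCC induces a connected subtree of the DFS forest rooted at its first-discovered vertex, and that any vertex reachable from $v$ and reaching $v$ back must lie in the subtree rooted at $v$.

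For the running time, each vertex is pushed and popped exactly once, each edge is examined a constant number of times, and all bookkeeping (indices, low-links, stack-membership flags) costs $O(1)$ per operation, yielding the claimed $O(n+m)$ bound. The main obstacle is establishing the low-link invariant rigorously, in particular justifying the case distinction between non-tree edges that enter the current component and those that leave it for a completed one; this is the delicate part of the argument, but it is exactly the content of the cited result \cite{Tarjan72}, so I would invoke Tarjan's theorem rather than reprove the invariant in full.
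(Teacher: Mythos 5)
Your proposal is correct and follows essentially the same route as the paper: construct $G_A$ from the transition list of $A$ in $O(n+m)$ time and then run Tarjan's linear-time SCC algorithm, citing \cite{Tarjan72} for its correctness. The extra detail you give on discovery indices, low-links, and the stack is a faithful sketch of Tarjan's algorithm, but since you ultimately invoke the cited result for the key invariant, the argument coincides with the paper's.
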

The complexity of this algorithm is determined by the computation of the graph $G_A$ from $A$, which can be done in $O(n+m)$ time, and then the usage of Tarjan's algorithm \cite{Tarjan72} for computing the strongly connected components ${\mathcal C}_1, \ldots, {\mathcal C}_e$ which also requires $O(n+m)$ time. We can also compute, within the same time complexity, the directed acyclic graph $S(A)$ of the strongly connected components of $G_A$: it has the vertices $\{1,\ldots,e\}$, corresponding to the strongly connected components ${\mathcal C}_1, \ldots, {\mathcal C}_e$ of $G$, respectively, and there is an edge from $i$ to $j$, for $i,j\in [e]$, if and only if there is an edge from a vertex of ${\mathcal C}_i$ to some vertex of ${\mathcal C}_j$. 

 We can assume that each strongly connected component produced in this algorithm is output as a list of vertices, and, for each vertex, the list of the edges leaving that vertex. For a state $q\in Q$, we set $S(q)=i$ if ${\mathcal C}_i$ is the strongly connected component of $G_A$ which contains state~$q$. 
 
The decomposition of $G_A$ in strongly connected components canonically corresponds to a decomposition of the NFA $A$ in strongly connected components: for $i\in [e]$, the component ${\mathcal C}_i$ of $A$ contains the states corresponding to the vertices of the component ${\mathcal C}_i$ of $G_A$, as well as all the transitions between them. The main difference between the components of $A$ and those of $G_A$ is that in a component of $A$ we might have multiple transitions between two states, while in the respective component of $G_A$ there is at most one edge from one vertex to another. Clearly, we can compute this decomposition of $A$ in strongly connected components in $O(n+m)$ time.

This concludes the description of the preprocessing steps performed at the beginning of our algorithms. 

Other graph-theoretic notions can be transferred canonically from $G_A$ to~$A$. Recall that a walk in the NFA $A$, originating in state $q$ and with target $q_n$, is a sequence of transitions $\beta=(q,a_1,q_1)(q_1,a_2,q_2)\cdots (q_{n-1},a_n,q_n)$, where  $q_i\in Q$, $a_i\in \Sigma \cup\{\epsilon\}$ for $i\in [n]$, and $q_i\in \delta(q_{i-1},a_i)$ for $i\geq 2$, and $q_1\in \delta(q,a_1)$. The label of $\beta$ is the word $\labels(\beta)=a_1\cdots a_n$. If $q=q_n$, then $\beta$ is called a cycle.

It is not hard to see that, for each state $q$, there is a cycle, denoted in the following $c_q$, containing $q$ that goes at least once through every transition of the NFA-component ${\mathcal C}_{S(q)}$. Indeed, for each transition $(q_1,a,q_2)$, where both $q_1$ and $q_2$ are in ${\mathcal C}_{S(q)}$, we can follow the walk connecting $q$ to $q_1$, then the transition $(q_1,a,q_2)$, and then follow the walk from $q_2$ to $q$; we then simply take the concatenation of all these cycles, and get $c_q$. So, in the following, for each state $q$, we will use $c_q$ to denote a cycle which goes through $q$ and traverses at least once every transition between states contained in ${\mathcal C}_{S(q)}$; we emphasise already at this point that this $c_q$ is never used in our algorithmic results, so we do not focus on constructing it efficiently, but, rather, it is important to know that such a cycle exists for each~$q$. Also, note that even if $S(q_1)=S(q_2)$, we do not necessarily have that $c_{q_1}=c_{q_2}$, but we do have that $\al(\labels(c_{q_1}))= \al(\labels(c_{q_2}))$. For states $q,q'$ with $S(q)=S(q')$, let $p_{q,q'}$ denote the shortest prefix of $c_q$ (w.r.t. number of transitions) connecting state $q$ to state $q'$ in the NFA $A$. 

A walk expression is:
\begin{itemize}[leftmargin=10pt,itemsep=1cm,parsep=-1cm]
\item a {\em single transition} $(q_1,a_2,q_2)$. This denotes a single one-edge walk, namely $(q_1,a_2,q_2)$. The origin of this walk expression is $q_1$ and the target $q_2$. Transitions are called atomic walk expressions (atoms, for short). 
\item an {\em extended cycle}, written as $(q[c_q]q',b,q'')$, where $q,q',q''\in Q$, $q'$ is a state of $\mathcal C_{S(q)}$, $b\in\Sigma\cup\{\epsilon\}$, and $q''\in \delta(q',b)$. This denotes a set of walks, namely those walks $c_q^i p_{q,q'} (q',b,q'')$ obtained by going $i$ times around the cycle $c_q$ (i.e., following $i$ times the transitions of the cycle $c_q$, starting in $q$), for some $i\geq 0$, then following the path $p_{q,q'}$, and, finally the transition $(q',b,q'')$, which corresponds to $q''\in \delta(q',b)$. The origin of this walk expression is $q$ and the target $q''$. Such walk expressions are also called atomic (or atoms). 
\item a concatenation $\alpha= \alpha_1 \alpha_2$ of walk expressions, such that the origin of $\alpha_2$ is the same as the target of $\alpha_1$. This expression denotes the set containing all walks obtained by concatenating walks denoted by $\alpha_1$ with walks denoted by $\alpha_2$. The origin of $\alpha$ is the origin of $\alpha_1$, the target of $\alpha$ is the target of $\alpha_2$. These are non atomic walk expressions.
		\vspace*{-5pt}
\end{itemize}

\section{FPT-Algorithms for $k$-ESU}\label{sec:FPT}

We assume that we are in the setting defined in Section \ref{sec:preProc}, and use the notations introduced in that section. We begin with a simple observation.
\begin{lemma}\label{lem:infUniv} $L$ contains, for each $i\in \N$, a word $w_i$ of universality $\iota(w_i)\geq i$ if and only if there exists a state $q$ of $A$ such that $\al(\labels(c_q))=\Sigma$. 
\end{lemma}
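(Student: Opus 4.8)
The plan is to prove the two implications separately, relying on the arch-factorisation characterisation of universality (the recalled theorem of \cite{barker2020scattered}) together with two elementary facts about how $\iota$ behaves under concatenation. First, superadditivity: $\iota(uv)\geq \iota(u)+\iota(v)$, obtained by juxtaposing the arch factorisations of $u$ and $v$. Second, the near-subadditive bound $\iota(uv)\leq \iota(u)+\iota(v)+1$, obtained by taking the arch factorisation $uv = \ar_1(uv)\cdots\ar_m(uv)\,\r(uv)$, locating the arch inside which the boundary between $u$ and $v$ falls, and observing that the arches strictly before it are factors of $u$ while those strictly after it are factors of $v$. In particular, appending a single letter $b$ gives $\iota(ub)\leq \iota(u)+1$, and iterating yields $\iota(p_1\cdots p_N)\leq \sum_{i=1}^N \iota(p_i) + (N-1)$ for any factorisation into $N$ pieces.

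For the direction ($\Leftarrow$), suppose some state $q$ satisfies $\al(\labels(c_q))=\Sigma$. Since every state is accessible and co-accessible by the preprocessing of Section~\ref{sec:preProc}, I would fix a walk from $q_0$ to $q$ with label $u$ and a walk from $q$ to $f$ with label $v$. For each $i\in\N$, the walk that goes from $q_0$ to $q$, then traverses the cycle $c_q$ exactly $i$ times, and finally goes from $q$ to $f$ is accepting, so $w_i := u\,(\labels(c_q))^i\,v \in L$. As $\al(\labels(c_q))=\Sigma$, each copy of $\labels(c_q)$ has $\iota\geq 1$, whence superadditivity gives $\iota\big((\labels(c_q))^i\big)\geq i$; since the subsequences of a factor are subsequences of the whole word, $\iota(w_i)\geq \iota\big((\labels(c_q))^i\big)\geq i$. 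Thus $L$ contains words of arbitrarily large universality.

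For the direction ($\Rightarrow$) I would argue by contraposition: assuming no state $q$ has $\al(\labels(c_q))=\Sigma$, I show $\iota(w)$ is bounded by a constant over all $w\in L$. Fix $w\in L$ and an accepting walk $\beta$ labelled $w$. Because the strongly connected components form the DAG $S(A)$, the sequence of components visited by $\beta$ never revisits a component, so $\beta$ uses at most $e-1$ transitions crossing between two distinct components; writing their labels as $b_1,\dots,b_t$ with $t\leq e-1$, this induces a factorisation $w = x_0 b_1 x_1 \cdots b_t x_t$, where each block $x_\ell$ is the label of a sub-walk staying inside a single component $\mathcal C_j$. Every transition used inside $x_\ell$ lies in $\mathcal C_j$, hence is traversed by $c_q$ for $q\in\mathcal C_j$, so $\al(x_\ell)\subseteq \al(\labels(c_q))\subsetneq \Sigma$ and therefore $\iota(x_\ell)=0$. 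Applying the iterated near-subadditive bound across the $2t+1$ pieces, and using $\iota(x_\ell)=0$ and $\iota(b_\ell)\leq 1$, yields $\iota(w)\leq t + 2t = 3t \leq 3(e-1)$, a bound independent of $w$. Hence no $w_i\in L$ with $\iota(w_i) > 3(e-1)$ can exist, contradicting the hypothesis.

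I expect the upper-bound direction ($\Rightarrow$) to be the main obstacle. The delicate points are (i) justifying that an accepting walk crosses component boundaries at most $e-1$ times, which rests on the fact that $S(A)$ is acyclic so a walk can never re-enter a component it has left, and (ii) controlling how universality accumulates along the factorisation, for which the bound $\iota(uv)\leq \iota(u)+\iota(v)+1$ is essential -- superadditivity alone gives only a lower bound and is useless here. Everything else is a routine pumping argument, and no separate treatment of $\sigma=1$ is needed, since in that case the absence of a full-alphabet cycle simply forces $A$ to be acyclic.
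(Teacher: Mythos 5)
Your proof is correct, and while the backward direction is the same pumping construction as the paper's (walk to $q$, traverse $c_q$ $i$ times, walk to $f$), your forward direction takes a genuinely different route. The paper argues directly: given a single word $w_{n+1}\in L$ with $\iota(w_{n+1})\geq n+1$, it splits an accepting walk along the $n+1$ arches into sub-walks $\beta_1\cdots\beta_{n+1}$, applies pigeonhole to their $n+1$ origins to find two equal origins $q$, and notes that the cycle $\beta_i\cdots\beta_{j-1}$ through $q$ carries all of $\Sigma$ and lies inside ${\mathcal C}_{S(q)}$, whence $\al(\labels(c_q))=\Sigma$. You instead argue by contraposition via the DAG of strongly connected components: an accepting walk crosses component boundaries at most $e-1$ times, each within-component block misses a letter (so has universality $0$), and your near-subadditivity bound $\iota(uv)\leq\iota(u)+\iota(v)+1$ caps $\iota(w)$ at $3(e-1)$. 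Both are sound; the difference is the quantitative threshold each yields. The paper's pigeonhole shows that one word of universality $n+1$ already forces a full-alphabet cycle, which is exactly what justifies the remark immediately after the lemma (no full-alphabet cycle implies $\iota(w)\leq n$ for all $w\in L$) and hence the standing assumption $k\leq n$; your bound $3(e-1)$ can exceed $n$, so it proves the lemma but not that sharper remark. Incidentally, your bound tightens to $e-1$ without any subadditivity: each arch of $w$ contains every letter, so it cannot sit inside a single block and must straddle one of the $t\leq e-1$ crossing transitions, and arches are disjoint. One nitpick: your closing claim that for $\sigma=1$ the absence of a full-alphabet cycle forces $A$ to be acyclic fails if there are $\epsilon$-cycles, but nothing in your argument actually relies on it.
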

\ifpaper
\else
\begin{proof}
	If such a state $q$ exists, then we get immediately the result, given that $q$ is both accessible and co-accessible. To get a word $w_i$ with $\iota(w_i)\geq i$, we follow a path from $q_0$ to $q$, then follow the cycle $c_q$ for $i$ times, and then follow a path from $q$ to the final state of $A$. The converse follows by a pumping argument. Recall that the automaton $A$ has $n$ states and we have in $L$ a word of universality $\iota(w_{n+1})= n+1$. Then $w_n=u_1\cdots u_{n+1}$, where $\al(u_i)=\Sigma$ for all $i\in [n+1]$. Moreover, the word $w_{n+1}$ is accepted along a path $\beta_1\beta_2\cdots \beta_{n+1}$, where $\labels(\beta_i)=u_i$ for all $i\in [n+1]$. By the pigeonhole principle, there exist $i$ and $j$, with $i,j\in [n+1]$ and $i<j$, such that the origin of $\beta_i$ and $\beta_j$ is the same state $q$. Then, $c=\beta_i\cdots \beta_{j-1}$ is a cycle that goes through $q$ and contains only transitions of ${\mathcal C}_{S(q)}$, so this means that $\al(\labels(c_q))=\Sigma$. \qed
\end{proof}
\fi
It is not hard to see that if $L$ does not contain words or arbitrarily large universality then $\iota(w)\leq n$ for all $w\in L$.

The following lemma follows immediately from the preprocessing described in Section \ref{sec:preProc} and shows that we can test this property efficiently.
\begin{lemma}\label{lem:cycles} We can compute in $O(m+n)$ time the sets $V_i\subseteq \Sigma\cup \{\epsilon\}$ of all labels of transitions of the strongly connected component ${\mathcal C}_i$ of $A$, for $i\in [e]$. We can then access in $O(1)$ the set $V_{S(q)}$ with $\al(\labels(c_q))=V_{S(q)}$, for all $q\in Q$.
\end{lemma}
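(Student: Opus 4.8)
The plan is to prove the two assertions separately: first the combinatorial identity $\al(\labels(c_q))=V_{S(q)}$, which says the sets $V_i$ carry exactly the letter-information we want, and then the two efficiency claims (building all $V_i$ in $O(n+m)$ time and querying $V_{S(q)}$ in $O(1)$).

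For the combinatorial part, recall from the preprocessing that $c_q$ is, by construction, a cycle through $q$ that traverses at least once every transition lying between states of ${\mathcal C}_{S(q)}$, and only such internal transitions (a cycle cannot leave and re-enter a strongly connected component, since the component graph $S(A)$ is acyclic). Hence the set of non-$\epsilon$ letters occurring in $\labels(c_q)$ is precisely the set of non-$\epsilon$ labels appearing on transitions internal to ${\mathcal C}_{S(q)}$, i.e. $V_{S(q)}$. I would note here the only delicate point of this identity: $\epsilon$-labelled internal transitions contribute nothing to $\al(\labels(c_q))$, so the sole possible discrepancy between $\al(\labels(c_q))$ and $V_{S(q)}\subseteq\Sigma\cup\{\epsilon\}$ is the symbol $\epsilon$ itself, which is irrelevant for the universality test of Lemma~\ref{lem:infUniv} (where one compares against $\Sigma$).

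For the algorithmic part, I would reuse the decomposition of $A$ into strongly connected components together with the array $S$, both available in $O(n+m)$ by Lemma~\ref{lem:stronglyConnected} and the preprocessing; this gives, for each $i\in[e]$, the list of transitions internal to ${\mathcal C}_i$. I would store each $V_i$ as a plain list and build all of them in a single pass, using one global array $\mathrm{mark}$ indexed by $\Sigma\cup\{\epsilon\}$ for deduplication: handling the components one at a time, for each internal transition $(q_1,a,q_2)$ of ${\mathcal C}_i$ I append $a$ to $V_i$ and set $\mathrm{mark}[a]=i$ exactly when $\mathrm{mark}[a]\neq i$. Because the components are processed consecutively, the equality $\mathrm{mark}[a]=i$ holds precisely when $a$ has already been inserted into the current $V_i$; thus each distinct label is inserted once per component, each transition is examined once, and the total work is $\sum_{i\in[e]} m_i \le m$, where $m_i$ is the number of internal transitions of ${\mathcal C}_i$, plus $O(n)$ to iterate over the components and $O(\sigma)\le O(m)$ to initialise $\mathrm{mark}$ once. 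This gives $O(n+m)$ time and $O(n+m)$ space.

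Finally, storing the lists $V_1,\dots,V_e$ in an array indexed by component number lets me return a handle to $V_{S(q)}$ in $O(1)$, since $S(q)$ is an $O(1)$ table lookup; keeping $|V_i|$ alongside each list additionally makes the decisive test $V_{S(q)}=\Sigma$ computable in $O(1)$, which is what Lemma~\ref{lem:infUniv} ultimately needs. I expect the only genuine obstacle to be this deduplication step: a naive per-component characteristic bit-vector would cost $\Theta(e\sigma)$ space and initialisation time, possibly exceeding $O(n+m)$; the ``last-seen component'' marker array is exactly the device that removes duplicate labels in $O(1)$ amortised time per transition while keeping both the time and the space linear.
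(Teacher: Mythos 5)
Your proof is correct and takes essentially the same approach as the paper: a single pass over the transitions, assigning the label of each internal transition (one with $S(q_1)=S(q_2)$) to the set $V_{S(q_1)}$ of its strongly connected component, with $O(1)$ access via the table $S$. Your additional care about deduplication (the last-seen-component marker array) and about the $\epsilon$-discrepancy between $\al(\labels(c_q))$ and $V_{S(q)}$ merely fills in implementation details that the paper's proof leaves implicit (the paper simply ``removes $\epsilon$ if needed'' at access time).
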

\ifpaper
\else
\begin{proof}
We execute first the preprocessing steps from Section \ref{sec:preProc}. Then, for each strongly component ${\mathcal C}_i$ of $A$, we initialize $V_i=\emptyset$. We then go through the set of transitions of $A$, and for every transition $(q_1,a,q_2)$ with $S(q_1)=S(q_2)$, we add $a$ to $V_{{S(q_1)}}$ (note that these sets might contain $\epsilon$). This can be done in $O(m)$ time. Finally, for some state $q\in Q$, we can access $V_q$ by accessing $V_{S(q)}$ and, if needed, removing $\epsilon$ from $V_{S(q)}$.
The conclusion follows.	
	\qed 
\end{proof}
\fi

Therefore, as a first main consequence of the previous lemma, we can test efficiently the existence in $A$ of a state $q$ such that $\al(\labels(c_q))=V_{S(q)}=\Sigma$, as in Lemma \ref{lem:infUniv}. If such a state exists, then we can trivially answer to $k$-ESU, for the input NFA $A$, positively, in polynomial time (in the size of $A$). 

So, let us assume, from now on, that there is no state $q$ with $\al(\labels(c_q))=\Sigma$. We emphasise that Lemma \ref{lem:cycles} allows us to compute in polynomial time, for each state $q\in Q$, only the set $V_{S(q)}= \al(\labels(c_q))$; as already mentioned in the previous section, we do not compute effectively the cycle $c_q$ (we just collect for each $q$ the letters which label all the transitions in its strongly connected component).  However, in the following, in our combinatorial proofs, we will still make use of the notation $c_q$, as introduced above. Since there is no transition leaving the final state $f$ of $A$, $V_{S(f)}=\emptyset$ and $c_f$ is the empty walk.

We will now move on and describe our main results: the first FPT algorithm w.r.t. the parameter $n$ for deciding $k$-ESU, and a more efficient (compared to the similar algorithm presented in \cite{regunivpaper}) FPT algorithm w.r.t. parameter $\sigma$. In both cases, we assume that $L$ does not contain words of arbitrarily large universality, so we can also assume that $k\leq n$, as explained above. 

\medskip

\noindent \textbf{FPT Algorithm w.r.t. $n$.}
To begin with, we show the following theorem:
\begin{theorem}\label{thm:reducedwalkExpressions}
If $L$ contains a word $w$ with $\iota(w)=k$ then there exists a word $u\in L$ with $\iota(u)\geq k$, which is the label of a walk denoted by a walk expression of the form $(q_0[c_{q_0}]q_0',b_1,q_1)(q_1[c_{q_1}]q_1',b_2,q_2)\cdots (q_{h-1}[c_{q_{h-1}}]q'_{h-1},b_h,q_h)$, where: $h\leq n$; $q'_i$ belongs to ${\mathcal C}_{S(q_i)}$ for all $i\in [h-1]$; $S(q_i)\neq S(q_j)$, for all $i\neq j$; and $q_h=f$ is the final state of $A$. 
\end{theorem}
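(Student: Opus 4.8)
The plan is to start from an accepting walk $\pi$ of $A$ with $\labels(\pi)=w$, cut it along the strongly connected components it passes through, and then rewrite the part of $\pi$ living inside each component into a single extended cycle. First I would establish the structural fact that along \emph{any} walk no strongly connected component is entered twice: if two states $p,p'$ of the same component $\mathcal C_i$ occurred on $\pi$ with a state of a different component strictly between them, then concatenating the sub-walk from $p$ to $p'$ with a return path $p'\to p$ inside $\mathcal C_i$ would create a cycle forcing that intermediate component into $\mathcal C_i$, a contradiction. Hence the states of $\pi$ lying in a fixed component form a contiguous block, and reading $\pi$ from $q_0$ to $f$ it visits pairwise distinct components $D_1,\ldots,D_{h+1}$, where $D_{h+1}=\{f\}$ is a singleton (no nontrivial component can contain $f$, since $f$ has no outgoing transition). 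Writing $q_{j-1}$ for the state where $\pi$ first enters $D_j$, $q'_{j-1}$ for the state from which $\pi$ leaves $D_j$ (for $j\le h$), $b_j$ for the label of the inter-component transition $(q'_{j-1},b_j,q_j)$, and $\pi_j$ for the portion of $\pi$ inside $D_j$, we obtain
\[ w = x_1 b_1 x_2 b_2 \cdots x_h b_h, \qquad x_j=\labels(\pi_j), \]
with $q_0$ the initial state and $q_h=f$. Since $S(q_0),\ldots,S(q_h)$ are pairwise distinct, we immediately get $h+1\le e\le n$, hence $h\le n$.

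Next I would perform the replacement. For each $j$, I replace $\pi_j$ by going $i_j$ times around the cycle $c_{q_{j-1}}$ and then following the shortest path $p_{q_{j-1},q'_{j-1}}$, and then taking the transition $(q'_{j-1},b_j,q_j)$; this is precisely the extended cycle $(q_{j-1}[c_{q_{j-1}}]q'_{j-1},b_j,q_j)$. Concatenating these for $j=1,\ldots,h$ yields a valid walk from $q_0$ to $f$ (the target of each extended cycle is the origin of the next, and $q_h=f$ is final), so its label
\[ u = y_1 b_1 \cdots y_h b_h \in L, \qquad y_j=\labels\!\left(c_{q_{j-1}}^{\,i_j}\, p_{q_{j-1},q'_{j-1}}\right), \]
is the label of a walk of exactly the claimed form.

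It remains to choose the $i_j$ so that $\iota(u)\ge k$, and for this I would rely on the monotonicity of universality under the subsequence order. Each traversal of $c_{q_{j-1}}$ contributes every letter of $\al(\labels(c_{q_{j-1}}))=V_{S(q_{j-1})}$, and since $x_j$ uses only transitions inside $D_j$ we have $\al(x_j)\subseteq V_{S(q_{j-1})}$. Thus, taking $i_j\ge |x_j|$, the factor $x_j$ embeds as a subsequence of $c_{q_{j-1}}^{\,i_j}$ (match the $t$-th letter of $x_j$ inside the $t$-th copy of $c_{q_{j-1}}$), so $x_j\preceq y_j$. As the inter-component letters $b_j$ are unchanged, this gives $w\preceq u$, i.e.\ $w$ is a subsequence of $u$. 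Consequently $\Subseq_k(w)\subseteq \Subseq_k(u)$, so from $\Subseq_k(w)=\Sigma^k$ we conclude $\Subseq_k(u)=\Sigma^k$, that is $\iota(u)\ge \iota(w)=k$, as required.

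The only genuinely nontrivial ingredients are the ``no component is revisited'' property, which simultaneously yields the distinctness of the $S(q_i)$ and the bound $h\le n$, and the embedding $x_j\preceq y_j$, which hinges on the defining property of $c_q$ that one turn already covers all of $V_{S(q)}\supseteq\al(x_j)$. I expect the former to be the main obstacle to state cleanly; everything after it is bookkeeping plus the (routine) fact that $\preceq$ refines the universality index.
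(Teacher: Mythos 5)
Your proof is correct, and the details check out (the contiguity lemma, the choice $i_j\geq|x_j|$, and the subsequence embedding all hold, including in the presence of $\epsilon$-transitions), but it is organized quite differently from the paper's argument. The paper also starts from an accepting walk for $w$, but it first inflates \emph{every} transition $(q_i,a_{i+1},q_{i+1})$ into a degenerate extended cycle $(q_i[c_{q_i}]q_i,a_{i+1},q_{i+1})$ and then runs an iterative rewriting procedure: repeatedly find the leftmost atom whose origin lies in a strongly connected component that also occurs as the origin of a later atom, find the rightmost such later atom, and collapse everything between them into a single extended cycle, arguing at each iteration that some word denoted by the new expression is at least as universal as before (by traversing the cycle $c_{q_1}$ sufficiently many times, namely at least $2|z|+1$, to cover the collapsed segment $z$); termination holds because each step decreases the number of atoms, and only at the end do the distinctness of the components and the bound $h\leq n$ emerge. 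Your structural lemma --- a walk can never re-enter a strongly connected component it has left --- is precisely the fact the paper uses silently when it asserts that the collapsed segment $z$ visits only states of ${\mathcal C}_{S(q_1)}$; you isolate and prove it, which lets you cut the walk into at most $n$ maximal in-component blocks in one pass, replace each block by $c_{q_{j-1}}^{i_j}p_{q_{j-1},q'_{j-1}}$ simultaneously, and certify $\iota(u)\geq k$ by a single subsequence embedding of $w$ into $u$, rather than by a per-iteration invariant. The mathematical core is the same in both arguments (contiguity of component visits, pumping $c_q$ enough times to dominate any in-component factor as a subsequence, and monotonicity of the universality index under the subsequence relation); what your version buys is the elimination of the termination-and-invariant bookkeeping and an explicit proof of the structural fact that the paper's proof leaves implicit.
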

\begin{proof}
	Let us consider a walk $\beta=(q_0,a_1,q_1)(q_1,a_2,q_2)\cdots (q_{r-1},a_r,q_r)$ with label $w$, where $q_r=f$. We will rewrite this walk in several steps, in order to derive the result from the statement.
	
	Clearly, the walk expression \\
	\centerline{$\beta'=(q_0[c_{q_0}]q_0,a_1,q_1)(q_1[c_{q_1}]q_1,a_2,q_2)\cdots (q_{r-1}[c_{q_{r-1}}]q_{r-1},a_r,q_r)$} 
	denotes walks whose labels are words $w'$ for which $\iota(w')\geq \iota(w)$ (as all such words have $w$ as subsequence). Obviously, $w$ is the label of some walk (namely $\beta$) denoted by the walk expression $\beta'$.
	
	We iterate the following process on $\beta'$:
\begin{itemize}[leftmargin=10pt,itemsep=1cm,parsep=-1cm]
		\item Firstly, find the leftmost atom $(q_1[c_{q_1}]q_1,a,q'_1)$ of the walk expression $\beta'$ such that there exists another atom $(q_2[c_{q_2}]q_2,b,q'_2)$ in $\beta'$, with $S(q_1)=S(q_2)$. Note that, if such states $q_1,q_2$ exist, then $q_1\neq f\neq q_2$. 
		\item Secondly, find the rightmost atom  $(q_2[c_{q_2}]q_2,b,q'_2)$, with $S(q_2)=S(q_1)$. Thus: $\beta' = x (q_1[c_{q_1}]q_1,a,q'_1) z (q_2[c_{q_2}]q_2,b,q'_2) y$, where neither $x$ nor $y$ contain atoms originating with a state $q$ of ${\mathcal C}_{S(q_1)}$ and $z$ is a walk visiting only states in ${\mathcal C}_{S(q_1)}$. Rewrite $\beta'$ as $\beta' = x (q_1[c_{q_1}]q_2,b,q'_2) y$. Note that no state $q$ of ${\mathcal C}_{S(q_1)}$ may appear in $\beta'$ as the origin of a walk expression, except $q_1$. Moreover, if a state $g_1$ is the origin of some atom of $x  (q_1[c_{q_1}]q_2,b,q'_2) $, then there is no other atom of $\beta'$ whose origin is some state $g'$ of ${\mathcal C}_{S(g)}$. 
		\vspace*{-5pt}
	\end{itemize}
	
	Let $w_0=w$. Now, we claim that at the end of the $i^\text{th}$ iteration, for $i\geq 1$, there exists a word $w_i$ which labels one of the walks denoted by the current expression $\beta'$ (i.e., as computed at the end of the respective iteration) such that $\iota(w_i)\geq \iota(w_{i-1})$. Indeed, we can obtain $w_i$ as follows: we take the prefix of $w_{i-1}$ corresponding to the walk expression $x$, then we follow transitions of the cycle $c_{q_1}$ at least $2|z|+1$ times (which is more than enough to cover all the factors of $z$ that had some state $q$ of ${\mathcal C}_{S(q_1)}$ as origin), then we follow the transition $(q_2,b,q'_2)$, and end with the suffix of $w_{i-1}$ corresponding to the expression $y$. 
	
	The above process is finite (in each step we reduce the number of atoms of $\beta'$), and shows how each walk $\beta$, with label $w$, is transformed into a walk expression $\beta'=(q_0[c_{q_0}]q_0',b_1,q_1)\cdots (q_{h-1}[c_{q_{h-1}}]q'_{h-1},b_h,q_h)$, where: $h\leq n$; $q'_i$ belongs to ${\mathcal C}_{S(q_i)}$ for all $i\in [h-1]$; $S(q_i)\neq S(q_j)$, for all $i\neq j$; and $q_h=f$. Moreover, as shown, there exists at least one word $u$ labelling a walk denoted by the expression $\beta'$, derived at the end of this process, such that $\iota(u)\geq \iota(w)$. 
	\qed\end{proof}

Note that we will not use Theorem \ref{thm:reducedwalkExpressions} as an algorithmic tool. It just shows that to identify the maximum universality index over the words of $L$ it is enough to consider words $w$ of maximum universality which label walks denoted by walk expressions $(q_0[c_{q_0}]q_0',b_1,q_1)\cdots (q_{h-1}[c_{q_{h-1}}]q'_{h-1},b_h,q_h)$, where: $h\leq n$; $q'_i$ belongs to $S(q_i)$ for all $i\in [h-1]$; $S(q_i)\neq S(q_j)$, for all $i\neq j$; and $q_h=f$ is the final state of $A$. In the following, we show that the search space for the words $w\in L$ with maximum universality index can be restricted further. In particular, our goal is to show that, in order to compute the maximum universality of a word accepted by $A$ along a walk described by expressions as the ones above, it is not important how many times a cycle is traversed, but it is only important to know the states $q$ along this walk and their corresponding sets $V_{S(q)}$. 

Before stating our next result, note that the walks denoted by an expression $\beta'=(q_0[c_{q_0}]q_0',b_1,q_1)(q_1[c_{q_1}]q_1',b_2,q_2)\cdots (q_{h-1}[c_{q_{h-1}}]q'_{h-1},b_h,q_h)$ are exactly the walks $\beta=c^{i_0}_{q_0}p_{q_0,q'_0}(q'_0,,b_1,q_1)\cdots c^{i_{h-1}}_{q_{h-1}}p_{q_{h-1}, q'_{h-1}}(q_{h-1},b_h,q_h)$, where $i_0,\ldots,i_{h-1}$ are non-negative integers. Our next result analyses certain factors of such walks.

\begin{lemma}\label{lem:decomposition}
Let $w$ be a word with $\iota(w)=k$, which labels a walk \\
\centerline{$\beta=c^{i_1}_{q_1}p_{q_1,q'_1}(q'_1,b_2,q_2)c^{i_2}_{q_2}p_{q_2,q'_2}(q'_2,b_3,q_3)\cdots c^{i_{t-1}}_{q_{t-1}}p_{q_{t-1},q'_{t-1}}(q'_{t-1},b_t,q_t)c^{i_{t}}_{q_t}$,} 
where $q'_i$ is a state of ${\mathcal C}_{S(q_i)}$ for all $i\in [t]$, and $S(q_i)\neq S(q_j)$ for $i\neq j$. 

Then, there exists a word $w'=w'_1\cdots w'_k$ and integers $s_0,s_1,\ldots,s_k$ such that:
\begin{itemize}[leftmargin=10pt,itemsep=1cm,parsep=-1cm]
\item $1=s_0<s_1<\ldots<s_k=t$
\item For $i\in [k]$, $w'_i$ is $1$-universal and labels the walk:\\
\hspace*{-10pt} $c_{q_{s_{i-1}}} p_{q_{s_{i-1}},q'_{s_{i-1}}}(q'_{s_{i-1}},b_{s_{i-1}+1},q_{s_{i-1}+1})\cdots c_{q_{s_i-1}}  p_{q_{s_i-1},q'_{s_i-1}} (q'_{s_i-1},b_{s_i},q_{s_i})c_{q_{s_i}}.$
		\vspace*{-5pt}
\end{itemize}
\end{lemma}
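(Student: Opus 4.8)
The plan is to reduce the statement to the arch structure of $w$ and then exploit the standing assumption that no component satisfies $\al(\labels(c_q))=\Sigma$. We may assume $k\geq 1$, so $w$ is in particular $1$-universal and $\al(w)=\Sigma$; by the arch-counting theorem of \cite{barker2020scattered}, $w$ then has \emph{exactly} $k$ arches, and I would fix its arch factorisation $w=\ar_1(w)\cdots\ar_k(w)\r(w)$, each $\ar_i(w)$ being $1$-universal. In parallel, I would decompose $w$ along the transitions of $\beta$ as $w=W_1\,b_2\,W_2\,b_3\cdots b_t\,W_t$, where $W_j=\labels(c_{q_j}^{i_j}p_{q_j,q'_j})$ is the part read while staying inside ${\mathcal C}_{S(q_j)}$. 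The role of the standing assumption is that $\al(W_j)\subseteq V_{S(q_j)}\subsetneq\Sigma$ for every $j$ (the path $p_{q_j,q'_j}$ and the cycle $c_{q_j}$ both stay inside the component).

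The key combinatorial observation is that every $1$-universal factor of $w$ must contain at least one transition letter $b_2,\dots,b_t$: a factor avoiding all of them lies inside a single $W_j$ and hence has alphabet $\subseteq V_{S(q_j)}\subsetneq\Sigma$. In particular each arch $\ar_i(w)$ consumes at least one transition; since the arches are pairwise disjoint factors of $w$ and there are only $t-1$ transitions, this already yields $t-1\geq k$, i.e.\ $t\geq k+1$, leaving room for the strictly increasing sequence $1=s_0<s_1<\dots<s_k=t$ we seek.

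I would then read the cut points directly off the arch factorisation: set $s_0=1$, $s_k=t$, and for $i\in[k-1]$ let $s_i$ be the index of the block $W_{s_i}$ in which the prefix $\ar_1(w)\cdots\ar_i(w)$ ends (fixing a convention for the degenerate case where it ends exactly on a transition). Strict monotonicity follows from the observation above: $\ar_{i+1}(w)$ starts after this endpoint yet must still contain a transition, which forces it to reach block $s_i+1$, so $s_{i+1}\geq s_i+1$; applying the same argument to $\ar_k(w)$ gives $s_{k-1}<t$. It then remains to verify $1$-universality of the segments. The walk associated with $w'_i$ traverses blocks $s_{i-1},\dots,s_i$ (each cycle exactly once) together with the transitions $b_{s_{i-1}+1},\dots,b_{s_i}$, so its label has alphabet $\bigcup_{j=s_{i-1}}^{s_i}V_{S(q_j)}\cup\{b_{s_{i-1}+1},\dots,b_{s_i}\}$. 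As $\ar_i(w)$ lives in exactly this range of blocks and transitions and contributes from block $j$ only letters of $V_{S(q_j)}$, we obtain $\Sigma=\al(\ar_i(w))\subseteq\al(w'_i)$, whence $w'_i$ is $1$-universal. Finally, consecutive segment-walks chain at the shared state $q_{s_i}$, so $w'=w'_1\cdots w'_k$ labels a genuine walk of the required form.

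The main obstacle I expect is \emph{not} the $1$-universality of the segments (a direct alphabet inclusion, using that one traversal of a cycle already yields the full $V_{S(q_j)}$, so that even blocks with $i_j=0$ only help), but rather guaranteeing that the cut points can be taken strictly increasing with $s_k=t$, equivalently that $t\geq k$. This is precisely where the assumption that no component spells all of $\Sigma$ is indispensable, via the observation that each arch must consume a fresh transition; some additional care is needed for the boundary convention when an arch ends exactly on a transition letter.
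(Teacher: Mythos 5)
Your overall strategy coincides with the paper's own proof: factor $w$ into $k$ one-universal pieces (the paper takes an arbitrary factorisation $w=w_1\cdots w_k$ with $\iota(w_i)=1$; you take the arch factorisation), map these pieces to contiguous subwalks of $\beta$, let the cut indices $s_i$ record where each piece ends, derive strict monotonicity of the $s_i$ from the standing assumption $\al(\labels(c_q))\subsetneq\Sigma$, and finish with the alphabet inclusion $\Sigma=\al(\ar_i(w))\subseteq\al(w'_i)$, which holds because a single traversal of $c_{q_j}$ already spells all of $V_{S(q_j)}$. Your additional counting step ($t\geq k+1$ via disjointness of the arches) is redundant, since it follows from $1=s_0<s_1<\cdots<s_k=t$ once monotonicity is established, but it is harmless.

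There is, however, one genuine flaw: your ``key combinatorial observation'' --- that every $1$-universal factor of $w$ must contain one of the letters $b_2,\ldots,b_t$, because a factor avoiding them lies inside a single $W_j$ --- is false in the paper's setting. The NFA model here allows $\epsilon$-transitions ($\delta:Q\times(\Sigma\cup\{\epsilon\})\rightarrow 2^Q$), and the walk expressions explicitly permit $b\in\Sigma\cup\{\epsilon\}$, so some of the $b_j$ in the lemma may be $\epsilon$ and then contribute no position of $w$ at all. A factor of the word $w$ can then straddle $W_j$ and $W_{j+1}$ while containing none of the transition letters: with $\Sigma=\{\ta,\tb\}$, $V_{S(q_j)}=\{\ta\}$, $V_{S(q_{j+1})}=\{\tb\}$ and $b_{j+1}=\epsilon$, the factor $\ta\tb$ is $1$-universal yet avoids every $b_\ell$. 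Both your counting argument and, more importantly, your strict-monotonicity step (``$\ar_{i+1}(w)$ \ldots must still contain a transition'') rest on this observation, so the proof as written breaks for automata with $\epsilon$-labelled inter-component transitions. The repair is exactly the move the paper makes: argue at the level of walks rather than word factors. The subwalk of $\beta$ spelling $\ar_{i+1}(w)$ cannot be contained in a single block $c^{i_j}_{q_j}p_{q_j,q'_j}$, because the label of such a block has alphabet contained in $\al(\labels(c_{q_j}))\subsetneq\Sigma$; hence that subwalk must cross at least one inter-component transition, possibly $\epsilon$-labelled, which is all that is needed to conclude $s_{i+1}\geq s_i+1$. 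With this substitution (and with your boundary convention fixed so that a piece ending exactly on a transition is assigned to that transition's target block), your argument goes through and is essentially the paper's proof.
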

\begin{proof}
	As $\iota(w)\geq k$, we have that there exist words $w_1,\ldots, w_k$, with $w=w_1\cdots w_k$ and $\iota(w_i)=1$ for all $i\in [k]$. Let $\beta_i$ be the subwalk of $\beta$ labelled by $w_i$. We have $\beta=\beta_1\beta_2\cdots \beta_k$. For simplicity, let $q'_t=q_t$, $p_{q_t,q'_t}=\epsilon$, and $s_0=1$. \looseness=-1
	
	For $i\in [k]$, let $s_i\geq s_{i-1}$ be the smallest integer such that $\beta_i$ is a factor (that is, contiguous subwalk) of $c^{i_{s_{i-1}}}_{q_{s_{i-1}}} p_{q_{s_{i-1}},q'_{s_{i-1}}}(q'_{s_{i-1}},b_{s_{i-1}+1},q_{s_{i-1}+1})\cdots c^{i_{s_i}}_{q_{s_i}} p_{q_{s_i},q'_{s_i}}$. It is not hard now to see that $s_{i}>s_{i-1}$ holds, as $\al(\labels(c_{q_{s_{i-1}}} p_{q_{s_{i-1}},q'_{s_{i-1}}})) = \al(\labels(c_{q_{s_{i-1}}} ))\subsetneq \Sigma = \al(\labels(\beta_i))= \al(w_i)$. 
	
	Let us now define, for all $i\in [t]$, the walk $\beta'_i$ as:\\
\centerline{$ c_{q_{s_{i-1}}} p_{q_{s_{i-1}},q'_{s_{i-1}}}(q'_{s_{i-1}},b_{s_{i-1}+1},q_{s_{i-1}+1})\cdots c_{q_{s_i-1}}  p_{q_{s_i-1},q'_{s_i-1}} (q'_{s_i-1},b_{s_i},q_{s_i})c_{q_{s_i}}$.}
	Also, let $w'_i=\labels(\beta'_i)$. Clearly, $\al(w'_i)=\al(w_i)=\Sigma$, as $\al(\label(c^{i_{s_i}}_{q_{s_i}} p_{q_{s_i},q'_{s_i}})=\al(\label(c_{q_{s_i}}p_{q_{s_i},q'_{s_i}})=\al(\label(c_{q_{s_i}})$. The conclusion follows, for $w'=w'_1\cdots w'_k$. 
	\qed 
\end{proof}

Therefore, by Lemma \ref{lem:decomposition}, to compute the maximum universality index over the words of $L$ it is enough to identify a word $w'\in L$ of maximum universality that labels some walk obtained by concatenating several subwalks of the form $c_{q_1}p_{q_1,q'_1} (q'_1,b_2,q_2)\cdots c_{q_{h-1}}p_{q_{h-1},q'_{h-1}}(q'_{h-1},b_h,q_h)c_{q_h}$, where: $h\leq n$; $q'_i$ belongs to $S(q_i)$ for all $i\in [h-1]$; $S(q_i)\neq S(q_j)$, for all $i\neq j$; and $q_h=f$ is the final state of $A$.
The key observation is that, in fact, to find such a word of maximum universality, it is enough to know the sequence of states $q_1,\ldots,q_h$ as above. We will explain now how this is done efficiently. 

\begin{lemma}\label{lem:matching}
Given a sequence of $h<n$ states $q_1\cdots q_h$, such that $S(q_i)\neq S(q_j)$ for all $i\neq j$, we can decide in $O(m+n\sigma \sqrt{n})$ time whether there exist states $g_1,\ldots, g_h$ and $g'_1,\ldots, g'_h$, with $g_1=q_1$, $S(g_i)=S(g'_i)=S(q_i)$ for all $i\in [h]$, and the walk $\beta = c_{q_1}p_{q_1,g'_1} (g'_1,b_2,g_2)\ldots c_{g_{h-1}}p_{g_{h-1},g'_{h-1}}(g'_{h-1},b_h,g_h)c_{g_h}p_{g_h,q_h}$ such that $\labels(\beta)$ is $1$-universal. 
\end{lemma}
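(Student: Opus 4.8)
The plan is to reduce the statement to a bipartite matching problem. The first step is to identify which letters the label $\labels(\beta)$ can possibly contain. For every $i$, the cycle $c_{g_i}$ (respectively $c_{q_1}$, $c_{g_h}$) traverses, by construction, every transition of $\mathcal{C}_{S(q_i)}$, so it contributes exactly the letters of $V_{S(q_i)}$ to the label, independently of which state of the component is chosen as $g_i$; moreover the connecting path $p_{g_i,g_i'}$ stays inside $\mathcal{C}_{S(q_i)}$ and hence adds no new letter. Consequently, writing $U=\bigcup_{i=1}^h V_{S(q_i)}$, every walk $\beta$ of the prescribed shape satisfies $\al(\labels(\beta))= U\cup\{b_2,\ldots,b_h\}$, and since all states inside an SCC are mutually reachable, the intra-component choices of $g_i,g_i'$ are always realisable and irrelevant for the alphabet. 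Hence the only freedom that matters is the choice of the connecting labels $b_2,\ldots,b_h$, and $\labels(\beta)$ is $1$-universal (i.e.\ $\al(\labels(\beta))=\Sigma$) if and only if $\{b_2,\ldots,b_h\}\supseteq M$, where $M=\Sigma\setminus U$ is the set of \emph{missing} letters.

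Second, I would make the choice of connecting labels explicit. For $i\in[h-1]$ let $B_i$ be the set of labels of transitions from a state of $\mathcal{C}_{S(q_i)}$ to a state of $\mathcal{C}_{S(q_{i+1})}$; the walk $\beta$ exists at all iff $B_i\neq\emptyset$ for every $i$, and then $b_{i+1}$ may be chosen to be any element of $B_i$. I now build a bipartite graph $H$ whose vertex classes are $M$ and the set of \emph{slots} $[h-1]$, with an edge $\{c,i\}$ whenever $c\in B_i\cap M$. The claim is that a $1$-universal walk of the required form exists if and only if $B_i\neq\emptyset$ for all $i$ and $H$ admits a matching saturating $M$. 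For the ``if'' direction, a matching saturating $M$ assigns each missing letter $c$ to a distinct slot that can realise it; setting $b_{i+1}=c$ there and choosing any label of $B_i$ in the remaining (nonempty) slots yields a walk whose label has alphabet $U\cup M=\Sigma$. Conversely, a $1$-universal $\beta$ forces $M\subseteq\{b_2,\ldots,b_h\}$ with each $b_{i+1}\in B_i$; selecting for every $c\in M$ one slot carrying it gives distinct slots for distinct letters (a slot has a single label), i.e.\ a matching saturating $M$, and each $B_i$ is nonempty.

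Finally I would bound the running time. Using the sets $V_i$ from Lemma~\ref{lem:cycles} and a position array for the input sequence, $U$, $M$, all $B_i$ and the edges of $H$ are obtained by a single $O(m)$ scan of the transitions (a transition $(p,a,r)$ contributes to slot $i$ exactly when $S(p)$ and $S(r)$ occupy positions $i$ and $i+1$). If $|M|>h-1$ no matching can saturate $M$ and we answer \emph{no}; otherwise $|M|<n$, so $H$ has $O(n)$ vertices. The crucial point for the complexity is to \emph{deduplicate} the edges, keeping only one copy of each pair $(c,i)$, so that $H$ has at most $|M|(h-1)=O(\sigma n)$ edges rather than one per transition; this is done with a $|M|\times(h-1)$ marking table, built in $O(\sigma n)$ time. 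Running Hopcroft--Karp on $H$ then costs $O(|E|\sqrt{|V|})=O(\sigma n\sqrt{n})$, and checking that every $B_i\neq\emptyset$ is free during the scan. The total is $O(m+\sigma n+n\sigma\sqrt{n})=O(m+n\sigma\sqrt{n})$. The main obstacle is precisely this last complexity control: without the edge deduplication the matching instance could have $\Theta(m)$ edges and Hopcroft--Karp would cost $O(m\sqrt n)$, which can exceed the target when $m$ is large; the conceptual core, by contrast, is the (routine once observed) reduction of the universality condition to a system-of-distinct-representatives / bipartite-matching question.
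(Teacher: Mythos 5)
Your proof is correct and takes essentially the same approach as the paper's: both reduce the question to a maximum-cardinality bipartite matching (solved via Hopcroft--Karp) between the missing letters $\Sigma\setminus\bigcup_{i}V_{S(q_i)}$ and the $h-1$ inter-component transition slots, with the same preliminary checks that each slot set is nonempty and the same handling of the trivial cases. The only cosmetic difference is that you make the edge deduplication and the $|M|\leq h-1$ cutoff explicit, which the paper achieves implicitly by first computing the sets $V_{q_i,q_{i+1}}$ of inter-component transition labels.
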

\begin{proof}
	Firstly, we can compute in $O(n+m)$ time the sets $V_{q_{i},q_{i+1}}\subseteq \Sigma\cup\{\epsilon\}$ of labels of transitions from states of ${\mathcal C}_{S(q_i)}$ to states of ${\mathcal C}_{S(q_{i+1})}$, for all $i\in [h-1]$. 
	If there exists $i\in [h-1]$ such that $V_{q_{i},q_{i+1}}=\emptyset$, then we answer the considered problem negatively.  Let us assume, in the following, that this is not the case.
	
	Then, we compute the sets $V_{S(q_i)}$, for all $i\in [h]$, as described in Lemma \ref{lem:cycles}, and the set $V=\bigcup_{i\in h} V_{S(q_i)}$. This takes $O(n+m)$ time. Further, we want to see if there is a way to identify states $g_2,\ldots, g_h,$ $g'_1,\ldots, g'_h$, and the letters $b_1,\dots, b_h$, such that $\beta = c_{q_1}p_{q_1,g'_1} (g'_1,b_2,g_2)\ldots c_{g_{h-1}}p_{g_{h-1},g'_{h-1}}(g'_{h-1},b_h,g_h)c_{g_h}p_{g_h,q_h}$ is $1$-universal. In other words, $ \Sigma\setminus V\subseteq \{b_1,\dots, b_h\}$. 
	
	The first case is when $V=\Sigma$. In that particular case, all walks $\beta = c_{q_1}p_{q_1,g'_1} (g'_1,b_2,g_2)\cdots c_{g_{h-1}}p_{g_{h-1},g'_{h-1}}(g'_{h-1},b_h,g_h)c_{g_h}p_{g_h,q_h}$, fulfilling the conditions from the statement, are labelled by $1$-universal words, and the considered problem can be answered positively. 
	
	The second case is if $\Sigma\subset V$ and $|V| >h$; then, the answer is negative.
	
	Otherwise, we define a bipartite graph $G$ as follows:
\begin{itemize}[leftmargin=10pt,itemsep=1cm,parsep=-1cm]
		\item The set of vertices is defined by the following two disjoint sets: on the one side, we have the set $[h-1]$ (so the vertices are integers $i\in [h-1]$)  and, on the other side, the set $\Sigma \setminus V$ (so the vertices are letters $b\in \Sigma \setminus V$).
		\item There exist an edge between vertex $i\in [h-1]$ and vertex $b\in \Sigma \setminus V$ if and only if $b\in V_{q_i,q_{i+1}}$. $G$ contains no other edges.
				\vspace*{-5pt}
	\end{itemize}
	
	If $\nu=|\Sigma \setminus V|$, we have $\nu \leq h$, so we can compute in $O(h\nu \sqrt{h+\nu})\subseteq O(n\sigma\sqrt{n})$ time a maximum-cardinality bipartite matching \cite{HopcroftK73}. If the cardinality of this matching is $\nu$ (that is, every letter-vertex of $G$ is covered by one edge of the matching), then we proceed as follows. 
	We first define, for $i\in \{2,\ldots,h\}$, the letter $b_i$: if $(i-1,b)$ is an edge of the matching, then $b_i=b$; otherwise, $b_i$ is an arbitrary element of $ V_{q_{i-1},q_i}$. In both cases, we set $g'_{i-1}\in V_{S(q_{i-1})}$ and $g_i\in V_{S(q_{i})}$ to be states connected by the letter $b_i$. It is now immediate that the path  $\beta = c_{q_1}p_{q_1,g'_1} (g'_1,b_2,g_2)\cdots c_{g_{h-1}}p_{g_{h-1},g'_{h-1}}(g'_{h-1},b_h,g_h)c_{g_h}p_{g_h,q_h}$ is labelled by an $1$-universal word: each letter of $\Sigma$ is either the label of a transition from a cycle $c_{q_i}$ or of a transition between some states $g'_{i-1}$ and $g_i$. Thus, in this case, we can answer the problem positively. If the cardinality of this matching is strictly smaller than $\nu $, then the problem is answered negatively. In that case, we cannot cover all letters with either transitions of the strongly connected components ${\mathcal C}_{S(q_i)}$, for $i\in [t]$, or the transition between states of these components. 
	
	Therefore, the statement holds.
	\qed \end{proof}

Note that in Lemma \ref{lem:matching} we do not compute the sequence $g_1\cdots g_h$, we just decide whether there is a walk from $q_1$ to $q_h$, going through the strongly connected components of $q_2,\ldots,q_{h-1}$, which is labelled by a $1$-universal word. 

Now, we can state a result which is the main building block of our algorithm.
\begin{lemma}\label{lem:algo_one_seq}
Given a sequence $q_1\cdots q_h$ of $h< n$ states with $q_h=f$ and $S(q_i)\neq S(q_j)$, for all $i\neq j$, we can compute in $O(n(m+ n\sigma \sqrt{n}))$ time the maximum integer $k$ for which there exists a walk $\beta$ with origin $q_1$ and target $q_h$ such that $\labels(\beta)$ is $k$-universal and $\beta$ goes through the strongly connected components of $q_1,\ldots,q_h$, and through no other strongly connected component. 
\end{lemma}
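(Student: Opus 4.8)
The plan is to combine the decomposition result of Lemma~\ref{lem:decomposition} with an efficient search over the possible ``splitting points'' of the sequence $q_1 \cdots q_h$, using Lemma~\ref{lem:matching} as an oracle. By Lemma~\ref{lem:decomposition}, any walk $\beta$ through exactly the strongly connected components of $q_1, \dots, q_h$ whose label is $k$-universal corresponds to a choice of break indices $1 = s_0 < s_1 < \cdots < s_k = h$ such that, for every $i \in [k]$, the sub-walk running through the components $S(q_{s_{i-1}}), \dots, S(q_{s_i})$ carries a $1$-universal label. Conversely, concatenating $1$-universal sub-walks produced for each consecutive pair of break indices (which fit together, since consecutive blocks share the boundary component $S(q_{s_i})$ and can be routed through a common state via the cycles $c_{q_{s_i}}$) yields a $k$-universal walk. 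Hence, writing $U(a,b)$ for the predicate ``the subsequence $q_a \cdots q_b$ admits a $1$-universal walk'' (decided by Lemma~\ref{lem:matching}), the quantity we must output is the length of the longest chain $1 = s_0 < \cdots < s_k = h$ with $U(s_{i-1}, s_i)$ true for all $i$; I would report this value, or $0$ when no single block is universal.

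The second ingredient is a monotonicity observation that makes this chain computable with only $O(h)$ oracle calls. Extending a block to the left (decreasing $a$) or to the right (increasing $b$) only adds letters to the set $V$ of cycle-labels and adds one more transition slot in the matching of Lemma~\ref{lem:matching}; both changes can only make the block easier to turn $1$-universal, since an old matching saturating $\Sigma \setminus V$ stays valid. Therefore $U(a,b)$ is monotone: it is preserved when $a$ decreases or $b$ increases. In particular, for each fixed $a$ there is a threshold $j^*(a) = \min\{\,b > a : U(a,b)\,\}$, and the monotonicity yields that $j^*$ is non-decreasing in $a$. (Here I would first check that every transition set between consecutive components is non-empty, so that at least one walk through all the components exists; otherwise there is nothing to do.)

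With this, I would compute all thresholds $j^*(a)$, $a \in [h-1]$, by a two-pointer sweep: keeping a single, non-decreasing pointer $b$, for each $a$ I advance $b$ (starting from its current value, and from at least $a+1$) and query $U(a,b)$ via Lemma~\ref{lem:matching} until it first succeeds, recording $j^*(a) = b$. Since $b$ never decreases, the total number of oracle queries over all $a$ is $O(h)$, so this phase costs $O(h)\cdot O(m + n\sigma\sqrt{n}) \subseteq O(n(m + n\sigma\sqrt{n}))$. Finally I would compute the maximum chain length by a backward dynamic program: set $g(h) = 0$ and $g(a) = 1 + \max\{\, g(b) : j^*(a) \le b \le h \,\}$, which—because $U(a,\cdot)$ becomes true exactly at $j^*(a)$—equals $1 + M(j^*(a))$ for the suffix-maximum $M(c) = \max_{c \le b \le h} g(b)$ (with $g(a) = -\infty$ when $j^*(a)$ does not exist). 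Maintaining $M$ while iterating $a = h, h-1, \dots, 1$ makes this step run in $O(h)$ time, and $\max(g(1),0)$ is the desired maximum universality.

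The main obstacle is the running time: the naive approach of evaluating $U(a,b)$ for all $\Theta(h^2)$ pairs and then running the chain-DP would cost $O(n^2(m + n\sigma\sqrt{n}))$, a factor of $n$ too slow. The crux is therefore the monotonicity of $U$ together with the two-pointer sweep, which is what brings the number of calls to Lemma~\ref{lem:matching} down to $O(h)$. The correctness of the reduction to the longest-chain problem rests on Lemma~\ref{lem:decomposition} and on verifying that the per-block $1$-universal walks guaranteed by Lemma~\ref{lem:matching} can indeed be concatenated into a single walk visiting precisely the components $S(q_1), \dots, S(q_h)$; I expect the remaining bookkeeping (the edge cases where $j^*(a)$ does not exist, and the exact endpoint $q_h = f$) to be routine.
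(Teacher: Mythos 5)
Your proposal is correct and takes essentially the same route as the paper's proof: the paper likewise reduces the problem to covering the sequence by consecutive blocks whose $1$-universality is tested with Lemma~\ref{lem:matching}, and likewise relies on the monotonicity of that test (if a block fails, it also fails after shrinking it on the left) to cut the naive $O(h^2)$ oracle calls of the quadratic dynamic program down to $O(h)$. The only organizational difference is that the paper runs a single forward greedy sweep, repeatedly finding the shortest next $1$-universal block, whereas you precompute all thresholds $j^*(a)$ with a two-pointer scan and then run a backward suffix-maximum DP; both variants are correct and yield the same $O(n(m+n\sigma\sqrt{n}))$ bound.
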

\ifpaper
The idea of the algorithm introduced in this proof is to iteratively use Lemma \ref{lem:matching}, within a greedy strategy. We use Lemma \ref{lem:matching} to identify the shortest prefix $q_1\cdots q_i$ of $q_1\cdots q_h$ for which there is a path from $q_1$ to $q_i$, going through the strongly connected components of $q_2,\ldots,q_{i-1}$, which is labelled by a $1$-universal word. If such a prefix is found, we repeat this process for $q_i \cdots q_h$. We return the number of times we can successfully execute this process. 
\else
\begin{proof}
	The problem can be solved by a dynamic programming algorithm. We initialise an integer-array $D$ with $n$ elements, where $D[i]=0$ for all $i$.
	
	Firstly, for every $i\in [h]$, we consider $j$ from $1$ to $i$.  We check with Lemma \ref{lem:matching}, for the sequence $q_j\cdots q_i$ whether there exists a path $\beta_{j,i}$ which is $1$-universal and connects $q_i$ to $q_j$, and goes through the strongly connected components of $q_i,\ldots,q_j$, but through no other strongly connected component (as in the statement of the respective lemma). If such a path $\beta_{j,i}$ exists, we set $D[i]=\max\{D[i],D[j]+1\}$. 
	
	It is not hard to show (by induction) that at the end of the iteration for $i=t$, we have that, for all $j\leq t$, the value $D[j]$ equals the maximum number $k_j$ for which there exists a path $\beta_j$, which connects $q_1$ to $q_j$, while going through the strongly connected components of $q_1,\ldots,q_j$, and through no other strongly connected component, such that $\labels(\beta_j)$ is $k_j$-universal. So, at the end of the algorithm, once the iteration for $i=h$ is completed, $D[h]$ equals the maximum number $k$ for which there exists a path $\beta$ connecting $q_1$ to $q_h$, going through the strongly connected components of $q_1,\ldots, q_h$ but through no other strongly connected component, such that $\labels(\beta)$ is $k$-universal (recall that $c_{q_h}$ is empty). 
	
	The running time of this algorithm is $O(n^2(m+ n\sigma \sqrt{n}))$, as it simply executes $O(h^2)$ times the algorithm from Lemma \ref{lem:matching} and $h\leq n$. 
	
	This can be further optimised as follows. To decide whether there exists a path $\beta_{i}$, going through the strongly connected components of $q_1,\ldots,q_i$, whose label is $k$-universal, our algorithm simply identifies the smallest $j$ such that there exist a path $\beta_{j}$, going through the strongly connected components of $q_1,\ldots,q_j$, whose label is $(k-1)$-universal, and then checks the existence of $\beta_{i,j}$ (as defined above) which is $1$-universal. Note that if such a $\beta_{i,j}$ does not exist for the minimal $j$ for which a path $\beta_{j}$ whose label is $(k-1)$-universal, then it will also not exist for any greater $j$. Therefore, a greedy strategy can be used to solve our problem. Using the notations introduced above, we first find the smallest $i_0$ such that there exist a path $\beta_{i_0}$ whose label is $1$-universal. Then, we find the smallest $i_i>i_0$ such that there exist a path $\beta_{i_0,i_1}$ whose label is $1$-universal, and conclude that $i_1$ is the smallest integer such that there exists $\beta_{i_1}$ (obtained by the concatenation of $\beta_{i_0}$ and $\beta_{i_0,i_1}$) whose label is $2$-universal; if there is no position $i$ for which a path $\beta_{i_0,i}$ with $1$-universal label exists, we set $i_1=-1$. We repeat this process, and, for $2\leq j\geq n$, if $i_{j-1}\neq -1$, identify the smallest $i_j$ such that there exists $\beta_{i_j}$ (obtained by the concatenation of $\beta_{i_{j-1}}$ and $\beta_{i_{j-1},i_j}$) whose label is $j$-universal; again, if we do not find any $i_j\leq n$ with the desired property, we set $i_j=-1$. We return the largest $j$ for which $i_j\neq -1$. This process only takes $O(n(m+ n\sigma\sqrt{n}))$ time, as it simply executes at most $O(n)$ times the algorithm from Lemma \ref{lem:matching}.
	Thus, the statement clearly holds.
	\qed\end{proof}
\fi

We can now state the main result of this section.
\begin{theorem}\label{thm:FPT_states}
Given a regular language $L$, over $\Sigma$, with $|\Sigma|=\sigma$, specified as an NFA $A$ with $n$ states, we can solve $k$-ESU in $O(2^nn(m+ n\sigma\sqrt{n}))$ time. That is, in the case when the input is given as an NFA, $k$-ESU is FPT w.r.t. the number $n$ of states of the input NFA.
\end{theorem}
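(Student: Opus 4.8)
The plan is to reduce the search for a $k$-universal word in $L$ to a finite enumeration over the strongly connected components (SCCs) of $A$, and then to invoke Lemma~\ref{lem:algo_one_seq} on each resulting SCC-sequence. First I would run the preprocessing of Section~\ref{sec:preProc}: compute the SCC decomposition ${\mathcal C}_1,\ldots,{\mathcal C}_e$ (with $e\leq n$), the DAG $S(A)$, and, via Lemma~\ref{lem:cycles}, the sets $V_{S(q)}=\al(\labels(c_q))$. Using Lemma~\ref{lem:infUniv} I would test whether some SCC has $V_i=\Sigma$; if so, $L$ contains words of arbitrarily large universality and I answer $k$-ESU positively. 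Otherwise $\iota(w)\leq n$ for every $w\in L$, so I answer negatively whenever $k>n$, and from now on assume $k\leq n$ and that no SCC spells out all of $\Sigma$.

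By Theorem~\ref{thm:reducedwalkExpressions} together with Lemma~\ref{lem:decomposition}, to decide whether $L$ has a word of universality at least $k$ it suffices to consider words labelling walks that visit states $q_1,\ldots,q_h$ lying in pairwise-distinct SCCs, with $q_1=q_0$ and $q_h=f$. Such a walk leaves each SCC it enters and never returns, so the sequence $S(q_1),\ldots,S(q_h)$ is a directed path in $S(A)$ from $S(q_0)$ to $S(f)$. The crucial observation is that the quantity computed by Lemma~\ref{lem:algo_one_seq} depends only on this sequence of SCCs, not on the chosen representatives: the sets $V_{S(q_i)}$ and the inter-component label sets $V_{q_i,q_{i+1}}$ used inside Lemma~\ref{lem:matching} are functions of the SCCs alone. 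Hence I may fix one representative per SCC, taking $q_0$ and $f$ as the representatives of their own components.

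I would therefore enumerate all subsets $T\subseteq[e]$ of SCCs (at most $2^e\leq 2^n$ of them). For each $T$ containing $S(q_0)$ and $S(f)$, I check in $O(n+m)$ time whether $T$ induces a single directed path in $S(A)$ from $S(q_0)$ to $S(f)$ that visits every element of $T$; since $S(A)$ is acyclic, such a Hamiltonian path of the induced sub-DAG, if it exists, is unique and is read off from the (then forced) topological order of $T$. When it exists it yields a unique state-sequence $q_1=q_0,\ldots,q_h=f$, to which I apply Lemma~\ref{lem:algo_one_seq} in $O(n(m+n\sigma\sqrt n))$ time, obtaining the maximum universality of a walk through exactly these SCCs. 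Taking the maximum of these values over all admissible $T$ gives, by the reduction above, the maximum universality achievable in $L$; I answer positively iff this maximum is at least $k$. The total running time is $O(2^n\, n(m+n\sigma\sqrt n))$, as claimed.

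The main obstacle is not the per-sequence computation, which Lemma~\ref{lem:algo_one_seq} already delivers, but justifying that a finite enumeration suffices: one must combine Theorem~\ref{thm:reducedwalkExpressions} and Lemma~\ref{lem:decomposition} to argue that restricting attention to walks through pairwise-distinct SCCs loses no universality, and must observe that the computed optimum is invariant under the choice of representatives. Together these guarantee that iterating over the $2^n$ subsets of SCCs, rather than over the infinitely many walks of $A$, is both correct and complete.
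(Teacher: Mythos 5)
Your proposal is correct and follows essentially the same route as the paper's proof: after the identical preprocessing (Lemmas~\ref{lem:infUniv} and~\ref{lem:cycles} to dispose of the unbounded-universality case, then the reduction via Theorem~\ref{thm:reducedwalkExpressions} and Lemma~\ref{lem:decomposition}), both arguments enumerate exponentially many candidate sequences of pairwise-distinct strongly connected components from $q_0$ to $f$, run Lemma~\ref{lem:algo_one_seq} on each, and return the maximum, within the same $O(2^n n(m+n\sigma\sqrt{n}))$ bound. The only (harmless) deviations are that the paper iterates over all $2^n$ subsets of \emph{states} lying in pairwise-distinct components rather than over subsets of components with fixed representatives---your observation that the value of Lemma~\ref{lem:algo_one_seq} depends only on the component sequence is exactly what makes this refinement sound---and the paper omits your Hamiltonian-path filter, leaving the inter-component connectivity checks to Lemma~\ref{lem:matching} inside Lemma~\ref{lem:algo_one_seq}.
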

\ifpaper
\else
\begin{proof}
	As discussed at the beginning of this section, we can assume that $A$ has a single final state. We process $A$ as in Lemma \ref{lem:cycles} and obtain the strongly connected components of $A$, and the corresponding directed acyclic graph of strongly connected components $S(A)$. We now order the vertices of $S(A)$ topologically (using the standard topological sorting algorithm, which runs in $O(n+m)$ time). Note that ${\mathcal C}_{S(q_0)}$ is always the first strongly connected component in this topological sorting of $S(A)$ (as all states are accessible) and ${\mathcal C}_{S_f}=\{f\}$ is the last component in this sorting. 
		
	We now iterate over all subsets $Q'=\{q_0,f\}\cup \{q_1,\ldots,q_{h-1}\}$ of $Q$; for simplicity, set $q_h=f$ and note that $h< n$. 
	
	For such a set, we check whether $S(q_i)\neq S(q_j)$, for all $i\neq j$; if this is not the case, we consider the next subset of $Q$. This can be done in $O(e)\subseteq O(n)$ time, by marking in an array of length $e$ how many times each integer $i\in [e]$ (that is, label of a strongly connected component) appears in the set $Q'=\{S(q_0),S(f)\}\cup \{S(q_1),\ldots,S(q_{h-1})\}$.
	
	If no two states of the set $Q'$ are part of the same strongly connected component, we sort $Q'$ w.r.t. the topological sorting of $S(A)$ (which can be done in $O(n)$ time, as we already have the topological sorting of $S(A)$). 
	
	After this step, and a potential relabelling of the states, we can assume that $q_{i-1}$ comes before $q_{i}$ in the topological sorting, for all $i\in [h]$, $q_0$ is the first in this order, and $q_h=f$ still holds. Then, we simply use Lemma \ref{lem:algo_one_seq} for the sequence $q_0\cdots q_h$ of $h< n$ states, as the conditions from the statement of this lemma are fulfilled. We then store the maximum value $k$ computed in the respective lemma (over all the considered sequences $q_0\cdots q_h$). This is the largest universality of a word which labels a path connecting $q_0$ to $f$, so the maximum universality of a word of $L$. As the number of the considered subsets is $O(2^n)$, the statement follows based on the complexity of the algorithm of Lemma~\ref{lem:algo_one_seq}.
	\qed \end{proof}
\fi
The algorithm of this theorem considers each set of states $\{q_0,f\}\cup \{q_1,\ldots,q_{h-1}\}$ of $A$, sorts it w.r.t. the topological sorting of $S(A)$ to obtain the sequence $q_0q_1\ldots q_{h-1}f$, and then uses Lemma \ref{lem:algo_one_seq} on it, if the prerequisites of that lemma apply. We return the largest universality index computed for such a subsequence. \looseness=-1

Note that Theorem \ref{thm:FPT_states} shows that $k$-ESU can be solved by an FPT-algorithm w.r.t. the number of states of the input NFA. Moreover, as $m\in O( n^2\sigma)$, $k$-ESU can be solved in linear time $O(\sigma)$ for NFAs with $n\in O(1)$. 
	\label{sec:acyclic}
	\medskip

\noindent \textbf{FPT Algorithm w.r.t. $\sigma$.} In \cite{regunivpaper} it was shown that we can solve $k$-ESU in $O(n^3 \poly(\sigma) 2^\sigma)$ time. We improve this result and give a linear time solution for $k$-ESU in the case of constant alphabets.
\begin{theorem}\label{thm:FPT_sigma}
Given a regular language $L$, over $\Sigma$, with $|\Sigma|=\sigma$, specified as an NFA $A$ with $n$ states, we can solve $k$-ESU in $O(n + m + m2^\sigma)$ time. 
\end{theorem}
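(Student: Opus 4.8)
The plan is to reduce the computation of $\max_{w\in L}\iota(w)$ to a longest-weight-path computation in an auxiliary product graph $G'$ whose size is $O(m2^\sigma)$ and which, crucially, turns out to contain no positively-weighted cycle, so that the longest path can be found in time linear in the size of $G'$. We then return \textbf{yes} iff this maximum is at least $k$.

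First I would run the preprocessing of Section~\ref{sec:preProc} in $O(n+m)$ time, so that $A$ has a single final state $f$, all states are accessible and co-accessible, and the strongly connected components together with the sets $V_{S(q)}$ are available (Lemma~\ref{lem:cycles}). Using Lemma~\ref{lem:infUniv} via the test of Lemma~\ref{lem:cycles}, I would check whether some state $q$ has $\al(\labels(c_q))=\Sigma$; if so, $L$ is $k$-$\exists$-universal for every $k$ and we answer positively. Otherwise we may assume no strongly connected component of $A$ carries all letters of $\Sigma$, and that $\max_{w\in L}\iota(w)\leq n$. Next I would build $G'$, whose nodes are pairs $(q,S)$ with $q\in Q$ and $S\subsetneq\Sigma$ the set of letters read since the last completed arch (the current \emph{open} arch). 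For each transition $(q,a,q')$ and each $S$: if $a\in S$, or $a\notin S$ and $S\cup\{a\}\neq\Sigma$, add a weight-$0$ edge to $(q',S\cup\{a\})$; if $a\notin S$ and $S\cup\{a\}=\Sigma$, add a weight-$1$ edge to $(q',\emptyset)$ (an arch is completed and the next one starts empty). Each $\epsilon$-transition gives weight-$0$, $S$-preserving edges. Since all states are accessible and co-accessible we have $m\geq n-1$, so $G'$ has $O(n2^\sigma)=O(m2^\sigma)$ nodes and $O(m2^\sigma)$ edges. By the arch characterisation of \cite{barker2020scattered}, reading the label $w$ of a walk from $q_0$ to $f$ while tracking the open set traces a path in $G'$ from $(q_0,\emptyset)$ to some $(f,S)$ of total weight equal to the number of arches of $w$, i.e.\ $\iota(w)$ (and $\iota(w)=0$ precisely when no completion fires, consistent with $\al(w)\subsetneq\Sigma$); conversely every such path projects to a walk of equal weight. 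Hence $\max_{w\in L}\iota(w)$ equals the maximum weight of a path from $(q_0,\emptyset)$ to a node $(f,S)$.

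The main obstacle, and the key lemma I would prove, is that $G'$ has no positively-weighted cycle. The argument: any cycle of $G'$ projects to a closed walk in $A$, which necessarily stays inside a single strongly connected component $\mathcal C_i$, so all its labels lie in $V_i\subsetneq\Sigma$. Along non-completing edges the open set only grows, while a weight-$1$ edge resets it to $\emptyset$. Tracing the open set around the cycle and using that returning to the starting node $(q,S)$ forces the final open set to equal $S$, one derives $S\subseteq V_i$, whence $S\cup V_i=V_i\neq\Sigma$; thus the completing edge could never have fired, a contradiction. Consequently every edge lying on a cycle of $G'$ has weight $0$, and the maximum-weight path is finite and well defined. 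This is exactly where the assumption ``no strongly connected component spans $\Sigma$'' is used.

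Finally I would compute the answer in linear time. Contract the strongly connected components of $G'$ with Tarjan's algorithm \cite{Tarjan72} to obtain a DAG of size $O(m2^\sigma)$: by the lemma above all intra-component edges have weight $0$, while weight-$1$ edges survive as positive arcs of the condensation. I then topologically sort the DAG and run the standard longest-path dynamic programme, assigning value $0$ to the component of $(q_0,\emptyset)$ and propagating $D[\cdot]=\max(D[\mathrm{pred}]+\mathrm{weight})$; the answer is $\max_{S}D[(f,S)]=\max_{w\in L}\iota(w)$, and we return \textbf{yes} iff it is $\geq k$. The edges of $G'$ can be generated on the fly during the SCC computation, each in $O(1)$ time using bitwise subset operations (constant-time in the paper's RAM model), so the whole procedure runs in $O(n+m+m2^\sigma)$ time; for constant $\sigma$ this is linear, as claimed.
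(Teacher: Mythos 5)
Your proof is correct, and it reaches the paper's bound by a genuinely different route. The paper runs its dynamic programme directly on the condensation of $A$: its table is indexed by (strongly connected component, alphabet of the current rest), and the combinatorial work lies in arguing that, per component visited, an optimal word may be assumed to sweep all transition labels $V_i$ of that component and can complete at most one new arch, so that transitions of the DP only need to be defined along edges of the component DAG. You instead build the subset-tracking product graph $G'$ over the \emph{individual} states of $A$, where correctness of the encoding is immediate (it is just Hébrard's greedy arch factorisation read online, so path weight equals the number of arches, i.e.\ $\iota$ of the label), and you concentrate all the combinatorial content in a single structural lemma: under the assumption that no strongly connected component of $A$ carries the full alphabet, $G'$ has no positively weighted cycle. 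Your proof of that lemma is sound (a cycle of $G'$ projects into one component $\mathcal{C}_i$ of $A$, so tracing the open set from the last completing edge around to the first one would force $\Sigma\subseteq V_i$, a contradiction), and it correctly explains why looping in $A$ that genuinely helps is not a cycle in $G'$ (the set coordinate strictly grows), so the DAG longest-path computation on the condensation of $G'$ captures it. The trade-off: the paper's table has $e\cdot 2^\sigma$ entries and pushes the difficulty into the optimality-of-sweeping argument, while your graph has $O(n2^\sigma)$ nodes and $O(m2^\sigma)$ edges but needs no reasoning about optimal behaviour inside components; both approaches use the same preliminary test via Lemmas~\ref{lem:infUniv} and~\ref{lem:cycles} and both land on $O(n+m+m2^\sigma)$. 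One caveat, which you share with the paper rather than introduce: treating operations on $\sigma$-bit sets (union, comparison with $\Sigma$, and using sets as array indices) as constant-time relies on the paper's large-memory-word convention for its exponential-space algorithms, so your accounting is on equal footing with the paper's own.
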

\ifpaper
The algorithm of this theorem uses a dynamic programming approach. If ${\mathcal C}_1=C_{S(q_0)},\ldots, {\mathcal C}_e=\{f\}$ is the topological sorting of $S(A)$, we efficiently compute $D[i][S]$ for $i\in [e]$ and $S\subseteq \Sigma$, the largest universality index of a word with $\al(\r(w))=S$, which labels a walk starting in $q_0$, ending in $C_i$, and going only through the components  ${\mathcal C}_1,\ldots, {\mathcal C}_i$. We then return the largest element of $D[e][\cdot]$.
\else
\begin{proof}
	We once more assume that $A$ has a single final state. We process $A$ as in Section \ref{sec:preProc} and obtain the strongly connected components of $G_A$, and the corresponding directed acyclic graph of strongly connected components $S(A)$. We now order the vertices of $S(A)$ topologically (using the standard topological sorting algorithm, which runs in $O(n+m)$ time). Note that ${\mathcal C}_{S(q_0)}$ is always the first strongly connected component in this topological sorting of $S(A)$ (as all states are accessible) and ${\mathcal C}_{S_f}=\{f\}$ is the last component in this sorting. 
	
	Let ${\mathcal C}_1={\mathcal C}_{S(q_0)},\ldots, {\mathcal C}_e=\{f\}$ be the strongly connected components of $A$, as ordered by the topological sorting. It is worth noting that in the directed acyclic graph of the strongly connected components of $A$, ${\mathcal C}_e$ is the only component that it is not the origin of any edge.
	
	Using Lemma \ref{lem:cycles}, we compute $V_i$, the set of labels of transitions of ${\mathcal C}_i$, for $i\in [e]$. For simplicity, if some $V_i$ contains $\epsilon$, then we remove it. Note that $\Sigma\setminus V_i\neq \emptyset$, as, otherwise, we would have words of arbitrarily large universality in $L$, and we have assumed that this is not the case. Similarly, for $j\in [e]$, we compute the set $V_{\rightarrow j}$ of pairs $(i,a)$ for which there exist transitions $(q_1,a, q_2)$ with $q_1\in {\mathcal C}_i$ and $q_2\in {\mathcal C}_j$. This can be done in $O(n+m)$ time as follows. We go through the transitions of $A$ and rename the transition $(q_1,a,q_2)$ as $(S(q_1),a,S(q_2)$. We sort (using radix sort) the list of renamed transitions $(S(q_1),a,S(q_2))$ of $A$, w.r.t. their third component (i.e., target), as first criterium, their first component (i.e., origin) as second criterium, and the second component (i.e., label) as third criterium (for this criterium we take the natural order on the integer alphabet $\Sigma$, and assume $\epsilon$ is smaller than all letters of $\Sigma$). Then, we eliminate duplicates. Finally, we go through the sorted list and if $(i,a,j)$ is the current element, then we add $(i,a)$ to $V_{\rightarrow j}$ 
	
	We now implement a dynamic programming approach. We define $e\times 2^\sigma$ matrix $D[\cdot][\cdot]$, where $D[i][S]$, for some $i\in [e]$ and $V_i \subseteq S\subsetneq \Sigma$, is the largest universality index of a word $w$ that labels a path which goes through the components ${\mathcal C}_1, \ldots, {\mathcal C}_i$, ending in ${\mathcal C}_i$, such that the rest $\al(\r(w))=S$. 
	
	We initialize $D[i][S]=-1$, for all $i\in [e]$ and $S\subsetneq \Sigma$.
	
	In step $1$, we set $D[1][V_1]=0$. 
	
	In step $i$, for $i\geq 2$, we compute the values $D[i][\cdot]$ by the following formula. If $(h,a)\in V_{\rightarrow j}$, we have $h<i$ due to the topological sorting; let $I_a=\{a\}$ if $a\neq \epsilon$ and $I_a=\emptyset$, otherwise. For each $S'\subsetneq \Sigma$, we proceed as follows. If $S'\cup\{a\}\cup V_i=\Sigma$ and $D[h][S']\neq -1$, then we set $D[i][V_i]=D[h][S']+1$. If $S'\cup\{a\}\cup V_i\neq \Sigma$ and $D[h][S']\neq -1$, we set $D[i][S'\cup\{a\}\cup V_i]=D[h][S']$. 
	
	The time needed in each step $i$ is upper bounded by $O(m_i 2^\sigma)$, where $m_i$ is the number of transitions whose target is in ${\mathcal C}_i$, so the overall time of computing the elements of the matrix $D$ is $O(m 2^\sigma)$.
	
	The largest value stored in the array $D[e][\cdot]$ is the largest universality of a word of $L$. To solve $k$-ESU, we simply need to check whether this value is greater than $k$. \looseness=-1
	
	The correctness of our approach is based on the following argument. We are interested in computing the words with the largest universality index that label paths which go through the components ${\mathcal C}_1, \ldots, {\mathcal C}_i$, and end in ${\mathcal C}_i$. For $i=1$, any such word $w$ has the rest $\r(w)\subseteq V_1$ and is not $1$-universal; obviously, it makes sense to always consider the words which have the largest rest w.r.t. inclusion. That is, we will consider only words which have $\r(w)=S$ (e.g.., which go at least once through all the transitions of ${\mathcal C}_1$); therefore we set $D[1][V_1]=0$. Clearly, $D[1][S]$ is undefined for all $S$ strictly containing $V_1$. Now, in step $i\leq 2$, let us assume the values $D[j][\cdot]$ are correctly computed for all $j<i$, and we want to compute the words with the largest universality index that label paths which go through the components ${\mathcal C}_1, \ldots, {\mathcal C}_i$, and end in ${\mathcal C}_i$. Let $w$ be such a word. We have that $w=w'aw''$, were $w'$ labels a path which goes through the components ${\mathcal C}_1, \ldots, {\mathcal C}_j$, for some $j$, $a$ is the label of a transition connecting component ${\mathcal C}_j$ with component ${\mathcal C}_i$, and then $w''$ is the label of a path going through $C_i$. If $\r(w')=S$, then we can assume that $w'$ is one of the words with the largest universality index that label paths which go through the components ${\mathcal C}_1, \ldots, {\mathcal C}_j$, ends in ${\mathcal C}_j$, and $\r(w')=S$ (i.e., its universality index is $D[j][S]$). Now, following the transition labelled with $a$ and those labelled with $w''$ we can achieve two things. On the one hand, if $V_i\cup\{b\}\cup S=\Sigma$, we can complete $\r(w' a)$ with the letters from $V_i$ (by choosing a prefix of $w''$ to be the label of a path that goes through all the transitions of ${\mathcal C}_i$) and create a new arch; then we can assume that $w''$ continues with the label of a path that goes through all the transitions of ${\mathcal C}_i$, and this creates a rest whose alphabet is $V_i$ (we could also assume $w''$ ends with a path with less letters, but this does not make sense when creating words with large universality index). To conclude, this suffix of $w''$ cannot create more arches nor a rest with a larger alphabet (w.r.t. inclusion); however, in total, $w''$ created one additional arch and the with alphabet equal to  $V_i$. On the other hand, if $V_i\cup\{b\}\cup S\neq \Sigma$, $w''$ can only contribute by adding the letters of $V_i$ to the alphabet of the rest of the constructed word (by choosing $w''$ to be the label of a path that goes through all the transitions of ${\mathcal C}_i$). These two cases are covered by the formula implemented by our dynamic programming approach. Clearly, when $i$ is considered we have to consider all choices of the component ${\mathcal C}_j$ from which we make the transition to ${\mathcal C}_i$. Based on these remarks, our approach is sound. 
	
	The statement follows.
	\qed\end{proof}
\fi

	\label{sec:regex}
	\medskip

\noindent \textbf{Regular Expressions.} 
We now consider languages represented by regexes. While NFAs and regexes define the same class of languages, the size of a regex required to represent a given language can be exponentially larger than the size of the NFA accepting that language (and vice versa, see, e.g., \cite{GruberH15}). 
In this section, we first show that $k$-ESU, where the language is specified as a regex of length $n$, can be solved in $O( n 2^{\sigma})$ time, using the FPT-algorithm w.r.t. $\sigma$ described above. Then, we provide a new proof of the NP-completeness of $k$-ESU when the input language is given as a regex. This result is interesting, as the hardness proof from \cite{regunivpaper} seems to require, in some cases, a regex exponentially larger than the corresponding NFA to describe the regular language used in the given reduction.  \looseness=-1

We start with the algorithmic part of this section. We assume from now on that we are given a regular expression of length $n$. The first result is immediate, by a proof similar to Lemma \ref{lem:infUniv}. 
\begin{lemma}
    \label{lem:unbounded_universality_regex}
    For regular expression $R$, if $R=R_1(R_2)^*R_3$ and $\Sigma\subseteq \al(R_2)$, then $L(R)$ is $k$-universal, for every $k\geq 1$. 
\end{lemma}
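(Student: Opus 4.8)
The plan is to mimic the construction in the proof of Lemma~\ref{lem:infUniv}: for every target universality $k$, exhibit an explicit word of $L(R)$ whose universality index is at least $k$, obtained by pumping the starred subexpression $R_2$. First I would use the hypothesis $\Sigma\subseteq\al(R_2)$ to obtain, for each letter $a\in\Sigma$, a word $v_a\in L(R_2)$ with $a\in\al(v_a)$; that is, each letter genuinely occurs in some word generated by $R_2$. Concatenating one such witness per letter, set $V=v_{1}v_{2}\cdots v_{\sigma}$. Then $V\in L(R_2)^{\sigma}\subseteq L(R_2)^{\ast}=L((R_2)^{\ast})$ and $\al(V)=\Sigma$, so $\Subseq_1(V)=\Sigma=\Sigma^1$ and hence $\iota(V)\geq 1$; in other words, $V$ is $1$-universal.

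Next I would invoke the superadditivity of the universality index under concatenation, $\iota(xy)\geq\iota(x)+\iota(y)$. This follows from the greedy nature of the arch factorisation together with the theorem of \cite{barker2020scattered}: since arches are peeled off greedily from the left, the first $\iota(x)$ arches of $xy$ coincide with the arches $\ar_1(x)\cdots\ar_{\iota(x)}(x)$ of $x$, leaving the suffix $\r(x)\,y$; and $\iota(\r(x)\,y)\geq\iota(y)$ because $y$ occurs as a subsequence of $\r(x)\,y$ and $\iota$ is monotone under supersequences (if $w$ is a subsequence of $w'$ then $\Subseq_j(w)\subseteq\Subseq_j(w')$ for every $j$). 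Applying superadditivity $k-1$ times gives $\iota(V^{k})\geq k\,\iota(V)\geq k$, and clearly $V^{k}\in L((R_2)^{\ast})$.

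Finally, fix any $u_1\in L(R_1)$ and $u_3\in L(R_3)$; these exist whenever $L(R)\neq\emptyset$, which we may assume (otherwise there is nothing to prove, as $L(R_2)^{\ast}\supseteq\{\varepsilon\}$ is always nonempty, so emptiness of $L(R)$ forces $L(R_1)$ or $L(R_3)$ to be empty). Put $w_k:=u_1\,V^{k}\,u_3\in L(R_1)\,L((R_2)^{\ast})\,L(R_3)=L(R)$. Since $V^{k}$ is a subsequence of $w_k$, monotonicity of $\iota$ yields $\iota(w_k)\geq\iota(V^{k})\geq k$. Hence, for every $k\geq1$, the language $L(R)$ contains a $k$-universal word, i.e.\ $L(R)$ is $k$-$\exists$-universal for all $k$, as claimed.

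The only genuine subtlety I expect is the very first step: the bound $\Sigma\subseteq\al(R_2)$ must be read semantically, meaning that each letter of $\Sigma$ occurs in \emph{some} word of $L(R_2)$, and one must assemble a single covering word $V$ out of the per-letter witnesses, since an individual word of $L(R_2)$ need not contain all of $\Sigma$ (e.g.\ for $R_2=(\ta\mid\tb)$). One should also implicitly assume $R_2$ has no ``dead'' occurrences of letters (such as in $\ta\emptyset$, where a letter occurs syntactically but $L(R_2)=\emptyset$), as otherwise the covering word need not exist; under the intended reading this is automatic. Everything after the construction of $V$ is the same superadditivity-and-monotonicity bookkeeping already underlying Lemma~\ref{lem:infUniv}.
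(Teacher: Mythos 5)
Your proof is correct and takes essentially the same approach as the paper: the paper gives no detailed argument, stating only that the result is ``immediate, by a proof similar to Lemma~\ref{lem:infUniv}'', which is precisely the pumping argument you carry out (assemble a $1$-universal word $V\in L((R_2)^*)$ from per-letter witnesses of $L(R_2)$, pump it $k$ times using superadditivity of $\iota$, and wrap it with words of $L(R_1)$ and $L(R_3)$ using monotonicity of $\iota$ under supersequences). Your caveat about reading $\Sigma\subseteq\al(R_2)$ semantically (each letter occurs in some word of $L(R_2)$) matches the convention the paper itself uses implicitly, e.g.\ in the proof of Lemma~\ref{lem:regex_acyclic}.
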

We can now decide in $O(n)$ time, for a regex $R$ of length $n$, whether $L(R)$ is $k$-universal, for every $k\geq 1$. If not, then $\iota(w)\leq n$ for all $w\in L(R)$, as $|w|\leq n$. In this case, we can reduce the case of general regexes to the case of star-free regexes.\looseness=-1

\begin{lemma}\label{lem:regex_acyclic}
    Given regular expression $R$, if there does not exist a decomposition $R = U_1(U_2)^*U_3$ with $\Sigma\subseteq \al(U_2)$, then there exist $k\geq 1$ such that $L(R)$ is not $k$-universal. Moreover, there exists a star-free regular expression $R'$, which can be computed in linear time $O(|R|)$, such that $\max\{\iota(w)\mid w\in L(R')\}= \max\{\iota(w)\mid w\in L(R)\}$.
\end{lemma}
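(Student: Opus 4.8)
The plan is to prove both parts at once by exhibiting an explicit linear-size transformation $\tau$ that turns $R$ into a star-free expression $R'$ with the same maximal universality; the boundedness claim then comes for free, since a star-free expression describes a finite language and hence has a well-defined maximal universality index. I read the hypothesis as saying that no subexpression $(U_2)^*$ occurring in $R$ satisfies $\al(U_2)=\Sigma$; this already forces every \emph{topmost} starred subexpression $(U)^*$ (one not contained in another star) to have $\al(U)\subsetneq\Sigma$, because the alphabet of a nested star is contained in that of the enclosing one. The transformation $\tau$ is the identity on atoms and commutes with concatenation and alternation; on a topmost star it sets $\tau((U)^*)=(\varepsilon\mid w_U w_U)$, where $w_U=a_1a_2\cdots a_{|T|}$ simply lists the distinct letters of $T=\al(U)$. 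Crucially, $\tau$ does \emph{not} recurse into $U$: the whole starred subtree is replaced by a flat word of length at most $2|T|\le 2|U|$. Hence the image of each subtree has size linear in that subtree, so $|R'|\in O(|R|)$, and a single left-to-right scan (collecting the distinct letters of each topmost body with a $\sigma$-sized marking array) computes $R'$ in $O(|R|)$ time.

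For the inequality $\max\{\iota(w')\mid w'\in L(R')\}\le\max\{\iota(w)\mid w\in L(R)\}$ I would lift every $w'\in L(R')$ to a supersequence in $L(R)$: wherever $\tau$ chose the alternative $w_Uw_U$, the original star can produce a word of $L(U)^*$ whose alphabet is $T$ and which contains $w_Uw_U$ as a subsequence (pick, for each letter $a_i\in T$, a word $v_i\in L(U)$ with $a_i\in\al(v_i)$ and take $v_1\cdots v_{|T|}v_1\cdots v_{|T|}$); the $\varepsilon$-choices correspond to zero iterations, and all non-star parts are left identical. The resulting $w\in L(R)$ satisfies $w'\in\Subseq(w)$, so $\iota(w)\ge\iota(w')$.

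The reverse inequality is the heart of the argument. Starting from an optimal $w\in L(R)$ with $\iota(w)=k$, I fix the arch factorisation $w=\ar_1(w)\cdots\ar_k(w)\,\r(w)$ and look at the contiguous factors of $w$ produced by the topmost stars of its derivation. The key observation is that each such factor has alphabet $\subsetneq\Sigma$, and therefore overlaps \emph{at most two consecutive arches}: an arch lying strictly inside the factor would be $1$-universal over a proper sub-alphabet, which is impossible. I then build $w'\in L(R')$ by replacing each topmost-star factor by $w_Uw_U$ (keeping the fixed parts verbatim) and assign, for every factor, its first copy $w_U$ to the arch overlapping its left end and its second copy $w_U$ to the arch overlapping its right end. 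Since every arch of $w$ collects all of $\Sigma$ from the fixed letters it contains together with the $T$-letters of the (at most two) factors it touches, each arch is reproduced in $w'$ by a disjoint, order-preserving interval that still contains every letter of $\Sigma$; these $k$ intervals witness $\iota(w')\ge k$.

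Finally, the first part follows immediately: as $R'$ is star-free, $L(R')$ is finite, so $\max\{\iota(w')\mid w'\in L(R')\}$ is some finite value $K$, and by the equality just established $L(R)$ contains no word of universality exceeding $K$, hence $L(R)$ is not $(K+1)$-universal. I expect the main obstacle to be this reverse inequality, and specifically making the ``at most two arches per factor'' bound and the in-order assignment of the two $w_U$-copies fully rigorous, including the bookkeeping when a single arch meets several distinct star-factors.
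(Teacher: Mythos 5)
Your proposal is correct and follows essentially the same route as the paper: replace each outermost starred subexpression $(U)^*$ by a doubled enumeration $w_U w_U$ of its alphabet $\al(U)$, argue via the arch factorisation that a star-factor with alphabet $\subsetneq \Sigma$ meets at most two consecutive arches (so the two copies can serve the left and right arch respectively), and lift words of $L(R')$ back into $L(R)$ by choosing, for each letter of $\al(U)$, one iteration of $U$ containing it. The only real difference is procedural: the paper rewrites one (leftmost) star at a time and argues that each single rewriting preserves the maximal universality index, which automatically avoids the several-star-factors-per-arch bookkeeping you flag as the main obstacle, whereas your one-shot structural map $\tau$ handles all stars simultaneously (and your extra $\varepsilon$-alternative is harmless but unnecessary, since supersequences never decrease universality).
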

\ifpaper
\else
\begin{proof}
	Let $R_0=R$. For $i\geq 0$ we rewrite $R_i$ into $R_{i+1}$ as follows. If $R_i=U_1(U_2)^*U_3$, such that $U_1$ is a valid star-free regex (i.e., there is open parenthesis which is not matched inside $U_1$), then we define $R_{i+1}=U_1u_2u_2U_3$, where $u_2=a_1\cdots a_t$ and $\{a_1,\ldots,a_t\}=\al(U_2)\cap\Sigma$. We claim that $\max\{\iota(w)\mid w\in L(R_{i+1})\}= \max\{\iota(w)\mid w\in L(R_{i})\}$. 
	
	Indeed, let $w$ be a word from $L(R_i)$ such that $\iota(w)$ is maximum. Then $w=w' w_1\cdots w_p w''$, such that $w'$ is described by $U_1$, $w''$ by $U_3$, and $w_1,\ldots,w_p$ are all described by $U_2$. Consider the arch factorisation of $w$: $w=\ar_1(w) \cdots \ar_k(w) \r(w)$. Because $\al(w_1\cdots w_p)\subsetneq \Sigma$, either $w_1\cdots w_p$ is completely contained in an arch $\ar_i(w)$, or there is an arch $\ar_i(w)$ covering a prefix of $w_1\cdots w_p$ and the remaining suffix of $w_1\ldots w_p$ is either covered by $\ar_{i+1}$ or by $\r(w)$. In both cases, we can immediately see that replacing in $w$ the factor $w_1\cdots w_p$ by $u_2u_2 $  can only produce a word $w'$ with $\iota(w')\geq \iota(w)$ (we need to insert two times the factor $u_2$ to account for the letters of the prefix of $w_1\cdots w_p$ covered by $\ar_i(w)$ and for the letters of its suffix covered by by $\ar_{i+1}$ or by $\r(w)$, in the second case mentioned above). 
	
	For the converse, let $w$ be a word from $L(R_{i+1})$ such that $\iota(w)$ is maximum. Then $w=w' u_2 u_2 w''$, such that $w'$ is described by $U_1$ and $w''$ by $U_3$. For each letter $a\in \Sigma $ appearing in $U_2$, there exists a word $w_a$ which is described by $U_2$; if $u_2=a_1\cdots a_t$, let $w_{U_2}=w_{a_1}\cdots w_{a_t}$ and note that $w_{U_2} \in L(U_2^*)$. Now, $u=w' w_{U_2} w_{U_2}w''$ has $w$ as a subsequence, so $\iota(u)\geq \iota(w)$. Moreover, $u\in L(R_i)$. Our claim has now be proven.
	
	Moreover, if $R_0=U_1(U_2)^*\cdots U_{2p-1}(U_{2p})^*U_{2p+1}$, such that $U_{2i-1}$ is a valid star-free regex for $i\in [p+1]$, then this process is finished after $p$ steps, and the resulting regex $R_{p}$ has length upper bounded by $O(|R|)$. 
	
	The statement holds for $R'=R_p$.
	\qed\end{proof}
\fi

Lemma \ref{lem:regex_acyclic} immediately leads us to the following conclusion. 
\begin{theorem}\label{thm:FPT_sigma_regex}
Given a regular language $L$, over $\Sigma$, with $|\Sigma|=\sigma$, specified as a regex $R$ of length $n$, we can solve $k$-ESU in $O(n2^\sigma)$ time. 
\end{theorem}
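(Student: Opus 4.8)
The plan is to reduce the regex case to the star-free case and then to the NFA case, chaining together the two preceding lemmas with Theorem~\ref{thm:FPT_sigma}. First I would run the test from Lemma~\ref{lem:unbounded_universality_regex}: scan the regex $R$ of length $n$ and check, in $O(n)$ time, whether $R$ admits a decomposition $R = R_1(R_2)^* R_3$ with $\Sigma \subseteq \al(R_2)$. This amounts to looking for a starred subexpression whose alphabet is all of $\Sigma$; since computing the alphabet of each subexpression can be done with a single bottom-up pass over the parse tree of $R$, the whole check costs $O(n)$. If such a decomposition exists, then $L(R)$ contains $k$-universal words for every $k \geq 1$, so we answer $k$-ESU positively in $O(n)$ time for any input $k$.

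If no such decomposition exists, then by Lemma~\ref{lem:regex_acyclic} every word $w \in L(R)$ has bounded universality (indeed $\iota(w) \leq |w| \leq n$), and I can compute, in linear time $O(|R|)$, a \emph{star-free} regular expression $R'$ satisfying $\max\{\iota(w) \mid w\in L(R')\} = \max\{\iota(w) \mid w\in L(R)\}$. The key point is that $R'$ is star-free and of length $O(n)$, so the language $L(R')$ is finite and can be represented by an acyclic NFA. I would build such an NFA $A'$ from $R'$ using the standard Thompson-style construction, which produces an automaton with $O(|R'|) = O(n)$ states and $O(n)$ transitions in $O(n)$ time; since $R'$ is star-free, $A'$ is acyclic, which is consistent with $L(R')$ being finite. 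Crucially, each strongly connected component of $A'$ is then a single state, so $m \in O(n)$ as well.

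Now I would simply invoke Theorem~\ref{thm:FPT_sigma} on $A'$, which solves $k$-ESU for an NFA with $n'$ states and $m'$ transitions in $O(n' + m' + m' 2^\sigma)$ time. Substituting $n', m' \in O(n)$ gives a running time of $O(n + n 2^\sigma) = O(n 2^\sigma)$. Since $\max\{\iota(w)\mid w\in L(A')\} = \max\{\iota(w)\mid w \in L(R')\} = \max\{\iota(w)\mid w \in L(R)\}$, deciding whether $L(R)$ is $k$-$\exists$-universal reduces exactly to comparing the maximum universality index reported for $A'$ against $k$. Combining the $O(n)$ preprocessing and conversion steps with the $O(n2^\sigma)$ invocation of Theorem~\ref{thm:FPT_sigma}, the total time is $O(n 2^\sigma)$, as claimed.

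I expect the only non-routine part of the argument to be justifying that the conversion from $R'$ to an NFA $A'$ stays within a linear number of states and transitions (so that the bound $O(n + m + m2^\sigma)$ of Theorem~\ref{thm:FPT_sigma} collapses to $O(n2^\sigma)$), and that correctness of the reduction is preserved across all three steps — i.e., that the $O(1)$-per-state strongly-connected-component structure of the acyclic $A'$ does not hide any subtlety when feeding it into the dynamic-programming algorithm of Theorem~\ref{thm:FPT_sigma}. Everything else is a direct composition of the two preceding lemmas with the already-established NFA algorithm, so the proof should be short.
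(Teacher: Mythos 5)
Your proposal is correct and follows essentially the same route as the paper's own proof: handle the unbounded-universality case via Lemma~\ref{lem:unbounded_universality_regex}, reduce to a star-free regex $R'$ via Lemma~\ref{lem:regex_acyclic}, build a linear-size acyclic NFA by the Thompson construction, and invoke Theorem~\ref{thm:FPT_sigma}. The subtleties you flag at the end are not actual obstacles, since Theorem~\ref{thm:FPT_sigma} applies to arbitrary NFAs regardless of their strongly-connected-component structure.
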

\ifpaper
The idea is to first use Lemma \ref{lem:regex_acyclic} to construct a star-free regular expression $R'$, of size $O(n)$, such that $\max\{\iota(w)\mid w\in L(R')\}= \max\{\iota(w)\mid w\in L(R)\}$. Then we build an NFA for $R'$, of total size $O(n)$, and use Theorem \ref{thm:FPT_sigma} for this NFA.
\else
\begin{proof}
	From $R$ we construct the star-free regex $R'$ as in Lemma \ref{lem:regex_acyclic}. This takes linear time $O(n)$, and $|R'|\in O(n)$. From the star-free regex $R'$ we can construct an acyclic NFA $A$ accepting $L(R')$ in $O(n)$ time (using the classical Thompson-construction, see, e.g., \cite{AllauzenM06}, and the references therein); $A$ has $O(n)$ states and $O(n)$ transitions. We can now apply Theorem \ref{thm:FPT_sigma} and get the result.
	\qed\end{proof}
\fi

This result shows that $k$-ESU can be solved in linear time for regular expressions over constant-size alphabets.

Regarding the hardness of $k$-ESU when the input is given as a regular expression, we can show the following result. 
\begin{theorem}
    \label{thm:regex_is_hard}
    $k$-ESU is NP-complete for languages defined by regular expressions. The problem is already NP-hard for $k=1$ and star-free regexes. 
\end{theorem}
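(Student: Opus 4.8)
The plan is to establish NP-completeness in two parts: membership in NP, and NP-hardness already for $k=1$ and star-free regexes.

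For membership, I would reduce the regex version of $k$-ESU to the NFA version. Given a regex $R$ of length $n$ and an integer $k$, the classical Thompson construction produces in $O(n)$ time an NFA $A$ with $O(n)$ states and transitions and $L(A)=L(R)$. Since $k$-ESU for NFAs lies in NP (as established in \cite{regunivpaper}; alternatively, a polynomial-size certificate is a sequence of at most $n$ states $q_0,\dots,q_h=f$ with pairwise distinct strongly connected components, whose best achievable universality can be checked in polynomial time via Lemma~\ref{lem:algo_one_seq}, the soundness of guessing such a sequence being guaranteed by Theorem~\ref{thm:reducedwalkExpressions}), the regex version is in NP as well.

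For NP-hardness, I would give a linear-time reduction from SAT (restricted to $3$-SAT if a bounded clause width is wanted) to $1$-ESU for star-free regexes; this is where the real work lies, since the point is to avoid the exponential blow-up that translating the NFA-based reduction of \cite{regunivpaper} into a regex would incur. Let the formula have variables $x_1,\dots,x_p$ and clauses $C_1,\dots,C_m$. I set $\Sigma=\{c_1,\dots,c_m\}$, one letter per clause. For each variable $x_i$ let $T_i$ be the concatenation of the letters $c_j$ over all clauses $C_j$ containing the positive literal $x_i$, and $F_i$ the concatenation of the $c_j$ over all clauses containing the negated literal $\lnot x_i$ (using $\varepsilon$ when the relevant set is empty). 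The reduction outputs the star-free regex $R=(T_1\mid F_1)(T_2\mid F_2)\cdots(T_p\mid F_p)$ together with $k=1$. The correctness argument I would then carry out rests on the fact that, for $k=1$, a word $w$ is $1$-universal exactly when $\al(w)=\Sigma$, so $1$-ESU asks whether some word of $L(R)$ uses every letter. Each $w\in L(R)$ has the form $w=u_1\cdots u_p$ with $u_i\in\{T_i,F_i\}$, hence corresponds bijectively to the assignment setting $x_i$ true iff $u_i=T_i$; by construction $c_j\in\al(w)$ iff some chosen $u_i$ contains $c_j$, i.e. iff the assignment makes some literal of $C_j$ true. Thus $\al(w)=\Sigma$ holds precisely when the assignment satisfies every clause, so $L(R)$ is $1$-$\exists$-universal iff the formula is satisfiable. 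Since every $c_j$ occurs in at least one $T_i$ or $F_i$, the minimal alphabet of $L(R)$ is exactly $\{c_1,\dots,c_m\}$, the regex uses no Kleene star, and its size is $O(\sum_j |C_j|)$, i.e. linear in the instance.

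The main obstacle I anticipate is not the correctness of the correspondence, which is transparent, but keeping the produced expression simultaneously star-free and of polynomial size while faithfully encoding the ``choose one of two literal-sets per variable'' structure of a truth assignment; the per-variable disjunction $(T_i\mid F_i)$ is exactly the device that achieves this without any star, and it is precisely this encoding that sidesteps the exponentially larger regex one would get by first building the automaton of \cite{regunivpaper} and converting it. A secondary point to verify carefully is the minimal-alphabet convention from Definition~\ref{def:univ}: I must ensure $\al(L(R))$ equals the full letter set, so that $1$-universality is not rendered trivially unattainable, which holds here because each clause contributes its letter to some alternative.
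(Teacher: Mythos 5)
Your NP-hardness reduction is essentially identical to the paper's: the same clause alphabet $\Sigma=\{c_1,\dots,c_m\}$, the same per-variable gadget $(T_i\mid F_i)$ concatenated into a star-free regex of linear size, and the same correspondence between $1$-universal words (i.e., words using every letter) and satisfying assignments; your explicit handling of empty literal sets via $\varepsilon$ and of the minimal-alphabet convention are details the paper glosses over, and they check out. The only genuine divergence is in the NP-membership half: the paper stays within its regex machinery, using Lemma~\ref{lem:unbounded_universality_regex} to dispose of the unbounded-universality case and Lemma~\ref{lem:regex_acyclic} to produce a star-free $R'$ whose words have length at most $|R'|$, so that a witness word itself is a polynomial certificate; you instead convert $R$ to an NFA via Thompson's construction and invoke NP membership of the NFA version from \cite{regunivpaper} (or, alternatively, your certificate consisting of a sequence of states with pairwise distinct strongly connected components, verified via Lemma~\ref{lem:algo_one_seq} and justified by Theorem~\ref{thm:reducedwalkExpressions}). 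Both routes are valid; the paper's is self-contained, while yours is shorter but leans on results proved elsewhere.
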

\ifpaper
This result follows from a reduction that maps $3$-SAT-instances with $n$ variables and $m$ clauses to $1$-ESU-instances where the input regex has length $O(m)$.
\else
\begin{proof}
The fact that $k$-ESU is in NP, irrespective of the value of $k$, follows from Lemmas \ref{lem:unbounded_universality_regex} and \ref{lem:regex_acyclic}. For a regular expression $R$, if there is no decomposition $R=U_1(U_2)^*U_3$ with $\Sigma\subseteq \al(U_2)$, then we construct a star-free regular expression $R'$, in linear time $O(|R|)$, such that $\max\{\iota(w)\mid w\in L(R')\}= \max\{\iota(w)\mid w\in L(R)\}$. Then we simply guess a word of length at most $|R'|$, check if this is generated by $R'$, and then check if its universality index is greater than $k$. 

For the hardness part, we present a reduction from $3$-SAT to $1$-ESU. Recall that the $3$-SAT problem takes a set of clauses $C_1, C_2, \dots, C_m$, each containing (at most) $3$ variables from the set $x_1, x_2, \dots, x_n$ of boolean variables. An instance of $3$-SAT is satisfiable if there exists some assignment to the set of variables such that every clause is true.

 Given an instance of $3$-SAT, we construct a regular expression $W$ as follows. We create the alphabet $\Sigma = \{c_1, c_2, \dots, c_m\}$ containing a unique symbol corresponding to each clause in the SAT instance. Given the variable $x_i$, let $C_{i,1}, C_{i, 2}, \dots, C_{i, j}$ be the clauses satisfied by a true assignment to $x_i$, and let $\overline{C}_{i,1}, \overline{C}_{i, 2}, \dots, \overline{C}_{i, j'}$ be the set of clauses satisfied by a false assignment to $x_i$. We construct, from $x_i$, the regular expression $V_i = (T_i \mid F_i)$, with  $T_i= c_{i, 1} c_{i, 2} \dots c_{i, j}$ and $F_i= \overline{c}_{i,1} \overline{c}_{i, 2} \dots \overline{c}_{i, j'}$, where $c_{i, \ell}$ is the symbol of $\Sigma$ corresponding to $C_{i, \ell}$, and $\overline{c}_{i, \ell'}$ the symbol of $\Sigma$ corresponding to $\overline{C}_{i, \ell'}$. We form the regular expression $W$ by concatenating $V_1, V_2, \dots, V_{n}$ into a single star-free regular expression, giving $W = V_1 V_2 \cdots V_n$. 
 
 Clearly, the length of the regular expression $W$ is $O(m)$, as each clause contains at most three variables; as side note, $W$ is star-free,  and there is an acyclic NFA of size $O(m)$ accepting $L(w)$. Moreover, the alphabet of $W$ has $m$ letters.

	Let now $W$ be a regular expression constructed for some given $3$-SAT instance as above. First, let us assume that there is some $1$-universal word $w'$ represented by $W$, further let $w' = v'_1 v'_2 \cdots v'_n$ where $v'_i$ is the factor of $w'$ corresponding to the clause $V_i$ of $w$. We create an assignment for the $3$-SAT instance by setting $x_i$ to true, if $v_i' = T_i$, or false if $v_i' = F_i$. Note that, as $w'$ is 1-universal, the symbol $c_i$ must appear in $v_j'$ for some $j \in [n]$, and by extension, this assignment must satisfy $c_i$. Therefore, the assignment constructed in this way will satisfy the $3$-SAT instance. 
	
	In the other direction, given an assignment to the variables $x_1, x_2, \dots, x_n$ satisfying the $3$-SAT instance, we construct a word $w' = u_1 u_2 \dots u_n$ where $u_i$ is either $T_i$, if $x_i$ is true, or $F_i$ if $x_i$ is false. As every clause in the $3$-SAT instance is satisfied, there must be some $i \in [n]$ such that $c_j \in \letters(u_i)$ for every $j \in [m]$. Therefore, $\iota(w') \geq 1$, completing the proof. \qed
\end{proof}
\fi

\medskip 

\noindent \textbf{Concluding Remarks.} The lower bound of Theorem \ref{thm:regex_is_hard} and its proof allow us to also make several final remarks on the presented FPT algorithms for $k$-ESU. We consider first the case when the input regular language is given as an NFA with $n$ states over an alphabet with $\sigma$ letters. The lower bound derived for $k$-ESU in~\cite{regunivpaper} and its proof show that, under the Exponential Time Hypothesis \cite{ImpagliazzoPZ01,LokshtanovMS11}, there are no $2^{o(\sigma)}\poly(n,\sigma)$ time algorithms solving the respective problem. The reduction from the proof of Theorem \ref{thm:regex_is_hard}, together with the Sparsification Lemma~\cite{ImpagliazzoPZ01}, shows that there are no $2^{o(n)}\poly(n,\sigma)$ time algorithms for $k$-ESU, under ETH. If the input language is given as a regular expression of length $n$ over an alphabet with $\sigma$ letters, we can once more refer to the reduction from the proof of Theorem \ref{thm:regex_is_hard} and the Sparsification Lemma, and get that there are no $2^{o(n)}\poly(n,\sigma)$ time algorithms for $k$-ESU, in this setting. As such, the algorithmic results we have obtained are tight from a fine-grained complexity perspective. \looseness=-1
	
	
	

	%
	%
	
%
	
	\newpage
	%
	%
	
	 \bibliographystyle{splncs04}
	 \bibliography{refs}
	
	 \ifpaper
	 
	 \newpage
	 \appendix
	 \section{Computational Model}\label{sec:computational_model}
	 
	 \section{Further Examples}\label{sec:furtherdefs}

A $2$-$\exists$-universal NFA ${A}$, which is not $3$-$\exists$-universal, and a $1$-$\forall$-universal NFA ${B}$ which is not $2$-$\forall$-universal are shown in Figure~\ref{img:universal_automaton}.

	 \section{Proofs}\label{sec:appendixproofs}


\noindent
\textbf{Lemma~\ref{lem:infUniv}.}
	$L$ contains, for each $i\in \N$, a word $w_i$ of universality $\iota(w_i)\geq i$ if and only if there exists a state $q$ of $A$ such that $\al(\labels(c_q))=\Sigma$. 

\noindent
\textbf{Lemma~\ref{lem:cycles}.}
We can compute in $O(m+n)$ time the sets $V_i\subseteq \Sigma\cup \{\epsilon\}$ of all labels of transitions of the strongly connected component ${\mathcal C}_i$ of $A$, for $i\in [e]$. We can then access in $O(1)$ the set $V_{S(q)}$ with $\al(\labels(c_q))=V_{S(q)}$, for all $q\in Q$.

\noindent
\textbf{Lemma~\ref{lem:algo_one_seq}.}
	Given a sequence $q_1\cdots q_h$ of $h< n$ states with $q_h=f$ and $S(q_i)\neq S(q_j)$, for all $i\neq j$, we can compute in $O(n(m+ n\sigma \sqrt{n}))$ time the maximum integer $k$ for which there exists a walk $\beta$ with origin $q_1$ and target $q_h$ such that $\labels(\beta)$ is $k$-universal and $\beta$ goes through the strongly connected components of $q_1,\ldots,q_h$, and through no other strongly connected component.

\noindent
\textbf{Theorem~\ref{thm:FPT_states}.}
	Given a regular language $L$, over $\Sigma$, with $|\Sigma|=\sigma$, specified as an NFA $A$ with $n$ states, we can solve $k$-ESU in $O(2^nn(m+ n\sigma\sqrt{n}))$ time. That is, in the case when the input is given as an NFA, $k$-ESU is FPT w.r.t. the number $n$ of states of the input NFA.

\noindent
\textbf{Theorem~\ref{thm:FPT_sigma}.}
	Given a regular language $L$, over $\Sigma$, with $|\Sigma|=\sigma$, specified as an NFA $A$ with $n$ states, we can solve $k$-ESU in $O(n + m + m2^\sigma)$ time.

\noindent
\textbf{Lemma~\ref{lem:regex_acyclic}.}
	For regular expression $R$, if there is no decomposition $R=U_1(U_2)^*U_3$ with $\Sigma\subseteq \al(U_2)$, then there exist $k\geq 1$ such that $L(R)$ is not $k$-universal. Moreover, there exists a star-free regular expression $R'$, which can be computed in linear time $O(|R|)$, such that $\max\{\iota(w)\mid w\in L(R')\}= \max\{\iota(w)\mid w\in L(R)\}$.

\noindent
\textbf{Theorem~\ref{thm:FPT_sigma_regex}.}
	Given a regular language $L$, over $\Sigma$, with $|\Sigma|=\sigma$, specified as a regex $R$ of length $n$, we can solve $k$-ESU in $O(n2^\sigma)$ time.

\noindent
\textbf{Theorem~\ref{thm:regex_is_hard}.}
	$k$-ESU is NP-complete for languages defined by regular expressions. The problem is already NP-hard for $k=1$.

	 \else
	 \fi
	
\end{document}